\newcommand{\bits}{\ensuremath{\text{bits}}}
\newcommand{\ojas}[1]{\noindent{\bf {\color{red}{\sc Ojas's note:}      #1}}}
\newcommand{\cindy}[1]{\noindent{\bf {\color{blue}{\sc Cindy's note:}      #1}}}
\newcommand{\brad}[1]{\noindent{\bf {\color{green}{\sc Brad's note:}      #1}}}
\renewcommand{\ojas}[1]{}
\renewcommand{\cindy}[1]{}
\renewcommand{\brad}[1]{}
\begin{document}
\title{Constant-Depth and Subcubic-Size Threshold Circuits for Matrix Multiplication}

\author{Ojas Parekh}
\affiliation{%
  \institution{Sandia National Laboratories}
  \city{Albuquerque}
  \state{NM}
  \country{USA}
}
\email{odparek@sandia.gov}

\author{Cynthia A. Phillips}
\affiliation{%
  \institution{Sandia National Laboratories}
  \city{Albuquerque}
  \state{NM}
  \country{USA}
}
\email{caphill@sandia.gov}

\author{Conrad D. James}
\affiliation{%
  \institution{Sandia National Laboratories}
  \city{Albuquerque}
  \state{NM}
  \country{USA}
}
\email{cdjame@sandia.gov}

\author{James B. Aimone}
\affiliation{%
  \institution{Sandia National Laboratories}
  \city{Albuquerque}
  \state{NM}
  \country{USA}
}
\email{jbaimon@sandia.gov}

\date{}

\keywords{threshold circuits; matrix multiplication; triangle counting; numerical algorithms; neural-inspired algorithms; neuromorphic computing; neural networks}

\begin{abstract}
Boolean circuits of McCulloch-Pitts threshold gates are a classic model of neural computation studied heavily in the late 20th century as a model of general computation. Recent advances in large-scale neural computing hardware has made their practical implementation a near-term possibility.  We describe a theoretical approach for multiplying two $N$ by $N$ matrices that integrates threshold gate logic with conventional fast matrix multiplication algorithms, that perform $O(N^\omega)$ arithmetic operations for a positive constant $\omega < 3$.  Our approach converts such a fast matrix multiplication algorithm into a constant-depth threshold circuit with approximately $O(N^\omega)$ gates.  Prior to our work, it was not known whether the $\Theta(N^3)$-gate barrier for matrix multiplication was surmountable by constant-depth threshold circuits.  

Dense matrix multiplication is a core operation in convolutional neural network training. Performing this work on a neural architecture instead of off-loading it to a GPU may be an appealing option. 
\end{abstract}

\maketitle

\section{Introduction}
\label{sec:intro}
Neuromorphic computing devices (NCDs) are composed of massively parallel networks of artificial neurons that compute boolean functions, akin to conventional logic gates but with relatively larger fan-in. The simplicity of each neuron affords NCDs immense energy efficiency and scalability, circumventing some of the data movement and energy bottlenecks associated with traditional HPC~\cite{schuman_survey_2017}. 
Within the last few years, large-scale commercial NCD platforms, such as SpiNNaker, IBM's True North, and Intel's Loihi, have emerged and currently offer up to hundreds of millions of neurons, with systems with billions of neurons a near-term likelihood~\cite{sugiarto_high_2016,schuman_survey_2017,davies_loihi:_2018}. Such NCDs were envisioned, for example, to accelerate deep learning; however, a specific application for which NCDs offer a clear and rigorously established advantage over conventional parallel computing remains elusive. Developing theoretical models for NCDs and identifying potential advantages and tradeoffs over other parallel models of computation remain largely open problems.  We consider a well-known theoretical model that captures some of the features offered by NCDs, and we demonstrate resource-efficient constant-time algorithms for a fundamental problem, matrix multiplication. 

We study the computational power of boolean circuits where the fundamental gates have unbounded fan-in and compute a linear threshold function\footnote{As we explain below, for convolutional neural networks, we can restrict the fan-in to a constant, or whatever practical fan-in the architecture supports.}.  Such circuits are rooted in the classical McCulloch-Pitts neuronal model~\cite{mcculloch_logical_1943}, with linear threshold functions serving as plausible models of spiking neurons.  A boolean \emph{threshold gate} with $m$ binary inputs $y_1, y_2, \ldots, y_m$ computes a \emph{linear threshold function}, outputting $1$ if and only if $\sum_{i=1}^m w_iy_i \ge t$, where the integer weights $w_i$ and integer threshold $t$ are constants associated with the gate.  Rational $w_i$ and $t$ may be represented, for example, by multiplying such $w_i$ and $t$ with a sufficiently large number. There are several natural measures of complexity associated with boolean circuits, including \emph{size}: the total number of gates, \emph{depth}: the length of the longest directed path from an input node to an output node, \emph{edges}: the total number of connections between gates, and \emph{fan-in}: the maximum number of inputs to any gate.

Threshold gates are the most basic model for an artificial neuron, and as such, all currently available neuromorphic computing architectures provide a hardware implementation of threshold circuits.  We consider threshold circuits with constant depth and polynomial size with respect to the total number of inputs; this class of circuits is called $TC^0$.  Such circuits represent a plausible model of constant-time parallel computing. This is a notion of perfect parallelizability, faster than the polylogarithmic time allowed in the complexity class $NC$. $TC^0$ circuits can compute a variety of functions including integer arithmetic, sorting, and matrix multiplication~\cite{sima_general-purpose_2003,siu_discrete_1995}.  There is also a $TC^0$ threshold-gate circuit of sublinear size to compute the parity of $n$ bits~\cite{siu_depth-size_1991}. In contrast, any constant-depth circuit with NOT gates and unbounded-fan-in AND and OR gates requires a superpolynomial number of gates~\cite{furst_parity_1984,yao_separating_1985}.

Understanding the power and limitations of $TC^0$ circuits has been a major research challenge over the last couple of decades.  The 1990's saw a flurry of results showing what $TC^0$ circuits could do~\cite{sima_general-purpose_2003,siu_discrete_1995}, while more recent results have focused on lower bounds showing what $TC^0$ circuits cannot do~\cite{kane_super-linear_2015}.         
$TC^0$ has been studied as a theoretical model. Its practicality is an open question. Currently, large-scale electronic circuits with high fan-in may be difficult to implement. However, neural-inspired architectures may offer hope. 

The adult human brain contains about 100 billion neurons, with maximum fan-in 10,000 or larger some regions, operating at under 50 watts~\cite{azevedo_equal_2009}. Though impressive, this figure represents a single class of instance size, so one might wonder how a synthetic system based on the physical limitations governing a brain might scale asymptotically.  We are not aware of any generative brain models for which this has been analyzed. However, a fan-in that grows with the total system size seems plausible for a 3-dimensional circuit such as the brain.  Although large fan-in is a critical resource in designing $TC^0$ algorithms, in practical neuromorphic devices resource requirements may grow as a function of fan-in.  For example, the available numerical precision or dynamic range may decrease as fan-in increases, resulting in an overall increase in energy expenditure or execution time.  Thus constant depth, in the $TC^0$ sense, may not practically equate to constant time. However, ideal theoretical algorithms may still guide the development of resource-efficient practical algorithms as neuromorphic architectures become more prevalent.

While neuromorphic computing has long focused on the use of analog computation to emulate neuronal dynamics~\cite{indiveri2011neuromorphicMoreAuthors}, recent years have seen rapid development of novel digital CMOS neural hardware platforms which can scale to very large numbers of neurons~\cite{khan2008spinnaker,merolla2014million}. While initially designed for large biologically inspired circuits, these architectures are attracting attention as an alternative to conventional CMOS architectures for accelerating machine learning algorithms such as deep artificial neural networks~\cite{esser2015backpropagation}. 
Many of these neural architectures, such as TrueNorth and the SpiNNaker platform, achieve considerable benefits in energy and speed by using large numbers of simple digital spiking neurons instead of a relatively smaller number of powerful multi-purpose processors.  These systems are almost configurable threshold gate circuits, except that they are capable of extended temporal dynamics.
Scientific computing is an application domain for which neural architectures are often quickly dismissed. There is a perception that human cognition is better for data-centric functions, such as image recognition, and for abstract decision making than for precise numerical calculations, particularly at large scale.   While biologically-inspired neural algorithms are often probabilistic or approximate, the neuronal-level computations in large scale neural architectures are sufficiently precise for numerical computation. \cindy{Does this need a reference?}

We consider a fundamental scientific-computing-inspired problem: can one produce constant-depth threshold circuits that compute the product of two $N \times N$ matrices using $O(N^{3-\varepsilon})$ gates for constant $\varepsilon > 0$?  For matrices with relatively large entries (say $\Omega(N)$ bits), this goal seems out of reach as we would need to output $\Omega(N^3)$ bits in the worst case.  However, prior to our work, it was not known if this was possible even for \emph{binary} matrices, those with entries that are all either 0 or 1. 

We show how to multiply two $N \times N$ integer matrices with $O(\log N)$-bit entries using $O(N^{3-\varepsilon})$ gates in constant depth.  The n{\"a}ive algorithm based on the definition of matrix multiplication requires $\Theta(N^3)$ arithmetic operations.  Our results are based on classical breakthroughs for fast matrix multiplication~\cite{Strassen1969}: multiplying two $N \times N$ matrices using $O(N^\omega)$ arithmetic operations, for a positive constant $\omega < 3$, that depends on the particular fast matrix multiplication being employed.   These techniques can be extended, in a relatively straightforward manner, to yield $O(\log N)$-time conventional parallel algorithms (for architectures such as PRAMs) with $O(N^\omega)$ total work.  In contrast, we give a constant-time algorithm, in the threshold circuit model, with approximately $O(N^\omega)$ total gates, which is a reasonable measure of total work.  

\ojas{Perhaps we should go ahead and mention: In fact we can construct circuits using $\widetilde{O}(N^{\omega})$ gates, given any conventional fast matrix multiplication algorithm performing $O(N^{\omega})$ multiplications, or maybe mention at a higher level that our gate complexity matches the multiplication complexity of a fast MM algorithm}

One of our motivations for neural-circuit-based matrix multiplication is convolutional neural networks for deep learning. See Section~\ref{sec:MM-apps} for more details. Deep learning is a major driver for neural-inspired architectures. A current vision for using these architectures for deep learning requires the matrix multiplication to be moved off-system to a GPU.  If circuit-based matrix multiplication can be made practical, perhaps this computation can be left on-chip, avoiding energy-intensive and slow I/O. 

We also consider the somewhat simpler problem of determining whether the trace of $A^3$ is at least $\tau$, for an $N \times N$ integer matrix $A$ with entries of size $O(\log N)$ bits.  This case illustrates the fundamental ideas of our approach and has applications in social network analysis, particularly to triangle counting.   The problem we solve allows one to answer: ``Does a graph $G$ have at least $\tau$ triangles?''  The user may select a relevant value of $\tau$. See Section~\ref{sec:MM-apps} for more details on triangles, social network analysis, and picking $\tau$. 

There is a simple depth-$2$ threshold circuit to solve this problem for a graph $G=(V,E)$.  The circuit has an input variable, $x_{ij}$ for $i,j \in V$ with $i < j$; the variable $x_{ij}$ is $1$ if $ij \in E$ and $0$ otherwise.  The first layer of the circuit consists of a gate, $g_{ijk}$ for each triple $i,j,k \in V$ with $i < j < k$.  The gate $g_{ijk}$ computes the value of the linear threshold function $x_{ij} + x_{ik} + x_{jk} \geq 3$ as an output $y_{ijk}$.  That is, the gate fires ($y_{ijk} = 1$) if and only if all edges in the triangle on $i$, $j$, and $k$ are in the graph.  The second layer consists of a single output gate that computes the linear threshold function $\sum_{i,j,k \in V: i < j < k} y_{ijk} \geq \tau$; this gate fires if and only if the number of triangles in $G$ is at least $\tau$.  The circuit has ${N \choose 3} + 1 = \Theta(N^3)$ gates.

We ask (and answer) whether it is possible to beat the size of this threshold circuit in constant depth.  This is akin to asking if it is possible to beat the n{\"a}ive matrix multiplication algorithm with an algorithm that performs $O(N^\omega)$ operations for $\omega < 3$.  In fact the above threshold circuit is a specialization of the n{\"a}ive matrix multiplication algorithm.

The analysis of our new threshold circuits is more involved than analyzing conventional fast matrix multiplication methods. We must explicitly consider sparsity (see Definition~\ref{def:d-sparsity}), a measure of how many times a matrix element or intermediate result is part of a computation during the fast multiplication. Thus, while we use existing fast matrix multiplication techniques to achieve our results, we use them in a new context.  Our performance results exploit different features of fast matrix multiplication techniques than those traditionally used.


\subsection*{Results and contributions}
Consider a fast recursive or divide-and-conquer matrix multiplication algorithm like Strassen's, with run-time complexity $O(N^{\omega})$.  We consistently use $\omega$ as the exponent in the runtime complexity of a base non-circuit fast matrix multiplication algorithm.  Our results leverage such a fast matrix multiplication to construct a constant-depth threshold circuit with $\widetilde{O}(N^{\omega + \varepsilon})$-gates, where $\varepsilon$ depends on the depth of the circuit. 

Our main result is an $O(d)$-depth, $\widetilde{O}(N^{\omega + O(\gamma^d)})$-gate threshold circuit for multiplying two $N \times N$ integer matrices with $O(\log N)$-bit entries, for a positive integer $d$ and constant $\gamma < 1$. Specifically, for a given integer $d$, the depth is $4d+1$.  The constant $d$ is a multiplicative factor hidden in the $\widetilde{O}$ for the number of gates. Section~\ref{sec:trace} gives a more detailed discussion of the value of $\gamma$.  For Strassen's algorithm it is about $0.491$. The constant multiplier of $\gamma^d$ is about $1.581$ for Strassen's algorithm. Thus for $d > 3$, this circuit will have $O(N^{3-\varepsilon})$ gates for positive constant $\varepsilon > 0$. 
We also give a $O(\log \log N)$-depth, $\widetilde{O}(N^{\omega})$-gate circuit for this task.

We present a simplified circuit of the same gate complexity and slightly lower depth ($2d+2$) for computing the trace of $A^3$, for an $N \times N$ integer matrix $A$.  This gives triangle counts for a graph $G$ with adjacency matrix $A$ (see Section~\ref{sec:problem-statement}).
Our circuits implement limited-depth versions of fast divide-and-conquer matrix multiplication, and our techniques should extend to other types of algebraic divide-and-conquer algorithms.

Our contributions are:
\begin{itemize}
\item This work revives and redirects research on designing algorithms for a classical theoretical model of parallel computation to a data science problem on an emerging class of neural-inspired parallel architectures. We show that threshold circuits, comprised of threshold gates that model neurons, might be applicable to linear-algebra computations for deep learning on new neuromorphic hardware. 
\item We give $O(\log \log n)$-depth and $\widetilde{O}(N^\omega)$-gate threshold circuits for computing the product of two $N\times N$ matrices with $O(\log N)$-bit entries, where $N^{\omega}$ is the complexity of a fast conventional matrix multiplication algorithm like Strassen's.
\item We give constant-depth threshold circuits with $\widetilde{O}(N^{\omega + \varepsilon})$ gates, where $\varepsilon$ is exponentially small in the depth of the circuit.  
\item Our circuits are elementary and are composed entirely of copies of a relatively simple depth-2 threshold circuit that performs addition.  We hope this will facilitate implementation of our approach in neural-inspired hardware.
\end{itemize}

\section{Preliminaries and problem statement}

\subsection{Fast matrix multiplication algorithms}\label{sec:fast-matmul}

Strassen developed the first matrix multiplication algorithm requiring $O(N^{3-\varepsilon})$ multiplications~\cite{Strassen1969}.  Strassen observed that one can compute the matrix product, $C = AB$ for $2 \times 2$ matrices $A$ and $B$ using 7 multiplications rather than the 8 multiplications required by the n{\"a}ive algorithm. The reduction in multiplications comes at the expense of additional additions and subtractions. 

\begin{figure}
\centering
\begin{align*}
\begin{aligned}
M_1 &= A_{11}(B_{12} - B_{22})\\
M_2 &= (A_{21} + A_{22})B_{11}\\
M_3 &= (A_{11} + A_{22})(B_{11} + B_{22})\\
M_4 &= A_{22}(B_{21} - B_{11})\\
M_5 &= (A_{11} + A_{12})B_{22}\\
M_6 &= (A_{21} - A_{11})(B_{11} + B_{12})\\
M_7 &= (A_{12} - A_{22})(B_{21} + B_{22}).
\end{aligned}
\qquad
\begin{aligned}
C_{11} &= M_3 + M_4 - M_5 + M_7\\
C_{12} &= M_1 + M_5\\
C_{21} &= M_2 + M_4\\
C_{22} &= M_1 - M_2 + M_3 + M_6.
\end{aligned}
\end{align*}
\caption{Strassen's algorithm for multiplying two $2 \times 2$ matrices $A$ and $B$. The 7 multiplications computed in Strassen's algorithm are represented by $M_1,\ldots,M_7$. Each $M_i$ is the product of weighted sums of entries of matrices $A$ and $B$.  The entries of product matrix $C$ are then computed from the $M_i$ using only addition and subtraction. One can verify by substitution and expansion that the entries of $C$ are set to the proper expressions involving entries of $A$ and $B$.}
\label{fig:strassen}
\end{figure}

Figure~\ref{fig:strassen} gives Strassen's algorithm for $2 \times 2$ matrices.  The algorithm is generalized to $N \times N$ matrices $A$ and $B$, where $N = 2^l$ for some positive integer $l$, as follows.  We partition $A$ and $B$ into 4 blocks, each of size $N/2 \times N/2$, and let $A_{ij}$ and $B_{ij}$ refer to these blocks, for $i,j \in \{1,2\}$.  The above equations remain correct. However, each $M_i$ now represents a multiplication of two $N/2 \times N/2$ matrices.  We can recursively apply the above procedure to perform each of these multiplications until the blocks are individual matrix elements or, for more practical applications, sufficiently small.  For each of the $l = \log_2 N$ levels of recursion, we invoke 7 recursive matrix multiplications, resulting in a total of $7^{\log_2 N} = N^{\log_2 7} \approx N^{2.81}$ scalar multiplications.  The recurrence relation for the total number of arithmetic operations is $T(N) = 7 \cdot T(N/2) + 18 \cdot (N/2)^2$ and $T(1) = O(1)$. The $18 \cdot (N/2)^2$ term arises from the 18 additions or subtractions on $N/2 \times N/2$ blocks in the expressions above. This recurrence shows the total number of scalar additions or subtractions is also $O(N^{\log_2 7})$. 

Although Strassen's seminal approach was based on a fast matrix multiplication algorithm for $2 \times 2$ matrices, subsequent work has yielded improved algorithms by employing a fast matrix multiplication algorithm involving larger square matrices as well as more sophisticated techniques.  The currently best known algorithm requires $O(N^{2.373})$ operations~\cite{le_gall_powers_2014}.  For the sake of exposition, we view fast matrix multiplication algorithms as recursive divide-and-conquer approaches, yet our techniques extend to the more general tensor perspective of fast matrix multiplication.  See the survey by Bl{\"a}ser~\cite{blaser_fast_2013} for a detailed introduction to and history of this area, including the connection between the (border) rank of the matrix multiplication tensor and fast matrix multiplication algorithms. 

\subsection{Technical challenges}
The divide-and-conquer Strassen's algorithm has a natural $O(\log N)$-time parallel (PRAM) implementation with a total work of $O(N^{\log_2 7})$ arithmetic operations.  The main question we consider is whether Strassen's approach and subsequent improvements of it can yield a $\emph{constant-time}$ algorithm implemented using threshold circuits with $O(N^{3-\varepsilon})$ gates, where the latter is a measure of total work.  The recursive ($O(\log N)$-depth) implementation of Strassen's algorithm only performs scalar multiplications during the base case. However, it performs matrix additions and subtractions at each level before and after the recursion, reusing computed results.  

If we attempt to implement Strassen's approach without recursion and the consequent reuse of computed results, we must first compute $N^{\log_2 7}$ scalars that are linear combinations of entries of $A$ and another $N^{\log_2 7}$ scalars representing linear combinations of entries of $B$.  The main technical hurdle is that such linear combinations involve up to $N$ entries of $A$ or $B$, and we seek to compute $O(N^{\log_2 7})$ of these sums with constant-depth threshold circuits.  A n\"aive implementation would require $\Omega(N^{1+\log_2 7})$ gates.

We overcome this hurdle by selecting a constant or $O(\log \log N)$ number of levels of recursion out of the $O(\log N)$ levels suggested by the standard implementation of Strassen's approach.  We make the notion of selecting a level of recursion precise in Section~\ref{sec:trace-circuit}.    This allows us enough reuse of computed results within the confines of a constant-depth or $O(\log \log N)$-depth circuit to achieve our results.  We note that we must carefully select the levels of recursion employed; for instance, simply selecting every $k$th level does not achieve our best results.

\subsection{Problem statement}
\label{sec:problem-statement}

We develop threshold circuits to compute the matrix product $C = AB$ of two $N \times N$ integer matrices $A$ and $B$.  Our results assume the entries of $A$ and $B$ require at most $O(\log N)$ bits.  We also consider a related problem: given an integer matrix $A$ as above and an integer $\tau$, determine whether the matrix trace of $A^3$ is at least $\tau$.  This problem is solved by a simpler threshold circuit than for computing matrix product and serves to illustrate our main ideas.  It also has applications to triangle counting in graphs as we describe below.

Let $A$ be the $N \times N$ symmetric adjacency matrix of a simple graph $G = (V,E)$ with $N$ nodes: for $i,j \in V$, $A_{ij} = A_{ji} = 1$ if $ij \in E$, and $A_{ij} = A_{ji} = 0$ otherwise.  Since there are no self-loops in the graph, we have $A_{ii} = 0$ for all nodes $i$. Consider the square of the adjacency matrix, $C = A^2$.  For $i,j \in V$ with $i \not =j$, $C_{ij} = \sum_{k \in V} A_{ik}A_{kj} = | \{ k \in V \mid ik \in E \text{ and } kj \in E \text{ and } k\neq i,j\}|$, which is the number of paths of length 2 between $i$ and $j$.  If there is an edge between the nodes $i$ and $j$, then each path of length 2 between them, along with the edge $ij$, forms a triangle in $G$.  Moreover, every triangle containing $i$ and $j$ arises in this way.  Suppose $G$ has $\Delta$ triangles.  Then,
\begin{equation}\label{eq:triangle-count}
3\Delta = \sum_{\substack{i,j \in V: i < j}} A_{ij} C_{ij},
\end{equation}
since the sum counts each triangle once for each of its edges.  Thus one can count the triangles in $G$ by summing some of the entries of $A^2$.  An equivalent computation is the trace of $A^3$, $\text{trace}(A^3)$, which, from \eqref{eq:triangle-count}, is equal to $6\Delta$. This counts the loop from each vertex in each direction.

We employ a threshold circuit implementation of fast matrix multiplication algorithms to compute this sum in constant depth using $O(N^{3-\varepsilon})$ gates.  In fact the exponent of our gate count can be made arbitrarily close to the exponent of the arithmetic operation count for the best possible fast matrix multiplication algorithm.

We explain our notion of a fast matrix multiplication algorithm.  We assume we are given an algorithm for multiplying two $T \times T$ matrices using a total of $r$ multiplications (for Strassen's algorithm, $T = 2$ and $r = 7$).  We assume $N = T^l$ for some positive integer $l$.  As outlined in Section~\ref{sec:fast-matmul}, this yields a recursive algorithm for computing the product of two $N \times N$ matrices, $C = AB$, using a total of $r^l = r^{\log_T N} = N^{\log_T r}$ scalar multiplications.

As with Strassen's algorithm, we assume we are given a list of $r$ expressions for each of the multiplications, $M_1,\ldots,M_r$; we view each $M_i$ as an expression involving the $T^2$ different $N/T \times N/T$ blocks of $A$ and $B$.  In particular each $M_i$ is a product of a $\{-1,1\}$-weighted sum of blocks of $A$ with a $\{-1,1\}$-weighted sum of blocks of $B$.  We also assume the fast matrix multiplication algorithm provides a list of $T^2$ expressions, each representing a $N/T \times N/T$ block of $C$ as a $\{-1,1\}$-weighted sum of the $M_i$.  More general fast matrix multiplication algorithms may allow the $M_i$ to be products of linear combinations with rational weights beyond $\{-1,1\}$ (likewise for the entries of $C$).  Although we do not present details here, our techniques can be extended for such fast matrix multiplication algorithms (these weights correspond to the $w_i$ in Lemma~\ref{lem:stronger-sum-circuit} in Section~\ref{sec:tc0-arithmetic-circuits}). 

For $1 \leq i \leq r$, let $a_i$ be the number of distinct blocks of $A$ that appear in the expression $M_i$, and let $b_i$ be defined analogously with respect to $B$.  We let $c_i$ be the number of expressions for blocks of $C$ in which $M_i$ appears.
\begin{definition}\label{def:d-sparsity}We let
\begin{align*}
s_A = \sum_{1 \leq i \leq r} a_i,\ s_B = \sum_{1 \leq i \leq r} b_i,\ \text{and } s_C = \sum_{1 \leq i \leq r} c_i.
\end{align*}
We define the \emph{sparsity} of a fast matrix multiplication algorithm as $s = \max\{s_A,s_B,s_C\}$.
\end{definition}
Sparsity will be an essential ingredient of our analysis, and we better motivate it in Section~\ref{sec:trace}. Others~\cite{ballard_improving_2016,Bini_Lotti_1980} consider sparsity in analyzing and improving the numerical stability of fast matrix multiplication algorithms, though they do not refer to it by the same name. 
\cindy{check: do they call this sparsity? I pulled the paper and the word sparsity doesn't appear to be in the paper. Maybe we should add the phrase ``but they don't call it sparsity.''  It would be better to include their term, but it was hard to find in a quick look.  Maybe they just give it a notation.} 

We use the following notation in proofs of circuit quality. We define $\bits(m)$ as the minimum number of bits required to express the nonnegative integer $m$ in binary, i.e.~the least integer $l$ such that $m < 2^l$.  

\section{Basic $TC^0$ arithmetic circuits}
\label{sec:tc0-arithmetic-circuits}

We first develop the fundamental $TC^0$ arithmetic circuits on which our results rely.  Our circuits are designed with neuromorphic implementation in mind, and we try to favor simple constructions over those that offer the least depth or gate count.  The bulk of the computation performed by our circuits is computing the bits of integer-weighted sums of nonnegative integers, $\sum_i w_i x_i$, where the nonnegative $x_i$ depend upon the inputs to the circuit but the weights $w_i$ are constants associated with the circuit.  



\ojas{State edge and fan-in complexity too: edge should be linear in gates; using siu et al. would result in superlinear edges w.r.t. gates}
Our first circuit follows from a classical technique to compute symmetric functions in $TC^0$ by Muroga from 1959~\cite{Mur59,Min61}; it is also a special case of a more general result by Siu et al.~\cite{siu_depth-size_1991}.  We include a proof to demonstrate the simplicity of the construction.
\begin{lemma}\label{lem:sum-circuit}
Let $s = \sum_i w_i x_i$ be an integer-weighted sum of bits, $x_i \in \{0,1\}$.  We assume $s \geq 0$ and fix an integer $l$ such that $s \in [0,2^l)$.  For $1 \leq k \leq l$, the $k$th most significant bit of $s$ can be computed by a depth-2 threshold circuit using $2^k + 1$ gates.
\end{lemma}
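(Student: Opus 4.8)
The plan is to reduce computing the $k$th most significant bit of $s$ to detecting the \emph{parity} of the integer quotient $q := \lfloor s / 2^{\,l-k} \rfloor$, which a depth-$2$ threshold circuit can read off from a family of comparison gates.

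First I would set $p := l-k$, so that the target bit is the coefficient of $2^p$ in the binary expansion of $s$; equivalently it is the least significant bit of $q = \lfloor s/2^p \rfloor$, i.e.\ it equals $1$ exactly when $q$ is odd. Since $0 \le s < 2^l$, we have $0 \le q \le 2^k - 1$, so $q$ takes only $2^k$ possible values. Next, for each integer $m$ with $1 \le m \le 2^k-1$ I would place a first-layer threshold gate $g_m$ that fires iff $s \ge m\,2^p$. Because $s = \sum_i w_i x_i$ is already an integer-weighted sum of the circuit inputs, each $g_m$ is literally the single threshold gate testing $\sum_i w_i x_i \ge m\,2^p$ on the inputs $x_i$; thus this layer has depth $1$ and uses $2^k-1$ gates. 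The point of these thresholds is that $g_m = 1$ iff $q \ge m$, so $g_1 \ge g_2 \ge \cdots \ge g_{2^k-1}$ is a non-increasing $0/1$ sequence containing exactly $q$ ones (here we use $q \le 2^k-1$).

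The one genuine step is then the alternating-sum identity. Since the $g_m$ form a non-increasing $0/1$ sequence with exactly $q$ ones,
\[
\sum_{m=1}^{2^k-1} (-1)^{m+1} g_m \;=\; \sum_{m=1}^{q} (-1)^{m+1} \;=\;
\begin{cases}
1, & q \text{ odd},\\
0, & q \text{ even},
\end{cases}
\]
which is exactly the desired bit and, crucially, always lies in $\{0,1\}$. Hence one output gate that fires iff $\sum_{m=1}^{2^k-1} (-1)^{m+1} g_m \ge 1$ outputs the bit; it is a legitimate threshold gate with $\pm 1$ weights and threshold $1$. Adding it to the first layer yields depth $2$ and a total of $(2^k-1)+1 = 2^k \le 2^k+1$ gates.

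The main (and only) obstacle is verifying this identity together with the fact that the alternating sum never leaves $\{0,1\}$, so that a lone threshold-at-$1$ gate can read it off; the monotonicity of the $g_m$ in $m$ makes both immediate. Everything else---that each $g_m$ and the output are genuine integer threshold gates, and the depth/gate bookkeeping---is routine. The slack of $+1$ in the stated bound comfortably absorbs, for instance, including the trivially-zero comparison $g_{2^k}$ (testing $s \ge 2^l$) if one prefers a uniform description of the first layer.
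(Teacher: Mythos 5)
Your proof is correct and is essentially the paper's own argument: the first layer of comparison gates $s \ge m\,2^{l-k}$ and the second-layer alternating sum are exactly the paper's $y_i := \text{bool}(s \ge i2^{l-k})$ and output test $\text{bool}(\sum_{i\ \text{odd}}(y_i - y_{i+1}) \ge 1)$, with your parity-of-quotient framing equivalent to the paper's odd-interval framing. The only cosmetic difference is that you drop the always-zero gate $g_{2^k}$, saving one gate relative to the stated bound.
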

\begin{proof} We define bool$(P)$, for a predicate $P$, to be $1$ if predicate $P$ is true and $0$ otherwise.

The $k$th most significant bit of $s$ is 1 precisely when $s$ lies in one of the intervals $[i2^{l-k},(i+1)2^{l-k})$ for some odd integer $1 \leq i < 2^k$.  The interval enumerates over all combinations of bits less signicant than the $k$th and the odd multipliers $i$ enumerate over all combinations of bits more significant than $k$. The first layer of our circuit computes the function $y_i := \text{bool}(s \geq i2^{l-k})$, for $1 \leq i \leq 2^k$.  The output of the circuit is $\text{bool}(\sum_{i \text{ odd}} (y_i - y_{i+1}) \geq 1)$, since $y_i - y_{i+1}$ is 1 if $s \in [i2^{l-k},(i+1)2^{l-k})$ and is 0 otherwise. 

\end{proof}

The circuit construction for the above lemma requires an integer $l$ such that the sum $s$ is guaranteed to be in $[0,2^l)$.  Note that if $s \notin [0,2^l)$, the circuit for any $k$ outputs 0.  We build upon the above to obtain our primary addition circuit.  The next lemma is a generalized and strengthened version of Siu et al.'s Lemma~\ref{lem:sum-circuit} in \cite{siu_depth-size_1991} for the depth-2 case.
\cindy{I looked quickly at this Lemma in Siu et al but it requires some notation from earlier in the paper.  So I will check this a little later.}
\begin{lemma}\label{lem:stronger-sum-circuit}Let $s = \sum_i w_i z_i$ be an integer-weighted sum of $n$ nonnegative numbers $z_i$, each with at most $b$ bits.  We assume $s \geq 0$ and let $w = \max_i |w_i|$.  The sum $s$ can be computed by a depth-$2$ threshold circuit with $O(wbn)$ gates.
\end{lemma}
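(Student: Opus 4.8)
The plan is to reduce the weighted sum of $b$-bit numbers to a weighted sum of bits, apply Lemma~\ref{lem:sum-circuit} to each output bit separately, and then control the gate count by exploiting a locality property of binary addition. First I would write each $z_i$ in binary as $z_i = \sum_{j=0}^{b-1} 2^j x_{ij}$ with $x_{ij} \in \{0,1\}$, so that
\[
s = \sum_i w_i z_i = \sum_{i}\sum_{j=0}^{b-1} (w_i 2^j)\, x_{ij},
\]
which is an integer-weighted sum of the $nb$ bits $x_{ij}$. Since $0 \le s \le wn(2^b-1) < wn\,2^b$, the output has $L := \bits(s) \le b + \bits(wn)$ bits, and the circuit will devote an independent depth-$2$ subcircuit to each output bit, all sharing the input layer; the union of these subcircuits is still depth $2$.

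The key observation I would use to bound the gate count is a locality property: bit $p$ of $s$ (counting from the least significant) is unaffected by the terms $w_i 2^j x_{ij}$ with $j > p$, since each such term is a multiple of $2^{p+1}$ and so contributes $0$ to $s \bmod 2^{p+1}$. Thus, writing $t_p = \sum_{i}\sum_{j \le \min(p,\,b-1)} (w_i 2^j)\, x_{ij}$, we have $s \equiv t_p \pmod{2^{p+1}}$, so bit $p$ of $s$ is determined by $t_p \bmod 2^{p+1}$. For $p \le b-1$ the truncated sum $t_p$ lies within $[-wn\,2^{p+1},\, wn\,2^{p+1}]$, an interval whose width is $O(wn\,2^{p})$ independent of the full magnitude of $s$.

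Next I would extract each bit via Lemma~\ref{lem:sum-circuit}. Because the weights $w_i$ may be negative, I would shift $t_p$ by a constant offset $C_p$ chosen to be a multiple of $2^{p+1}$ large enough that $\hat t_p := t_p + C_p \ge 0$ always (e.g.\ $C_p = wn\,2^{p+1}$); adding a multiple of $2^{p+1}$ leaves bits $0,\dots,p$ unchanged, so bit $p$ of $\hat t_p$ still equals bit $p$ of $s$, and $\hat t_p$ is a nonnegative weighted sum of bits (the offset merely shifts every threshold appearing in the construction of Lemma~\ref{lem:sum-circuit}). Now $\hat t_p \in [0,\, wn\,2^{p+2})$ has at most $p + 2 + \bits(wn)$ bits, so bit $p$ is its $k$th most significant bit with $k = 2 + \bits(wn)$, which Lemma~\ref{lem:sum-circuit} computes using $2^k + 1 = O(wn)$ gates --- a bound independent of $p$. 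This handles the $b$ low-order bits ($0 \le p \le b-1$) in $O(wnb)$ gates total.

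Finally, for the remaining high-order bits ($b \le p < L$, of which there are $\bits(wn)$) we have $t_p = s$, so I would apply Lemma~\ref{lem:sum-circuit} directly to the nonnegative sum $s \in [0,2^L)$: bit $p$ is its $(L-p)$th most significant bit and costs $2^{L-p}+1$ gates, and summing this geometric series over $p = b,\dots,L-1$ gives $O(2^{\,L-b}) = O(wn)$ gates. Adding the two contributions yields $O(wnb)$ gates in depth $2$. The main obstacle is the accounting that keeps the total at $O(wnb)$ rather than $O\!\left(wnb + wn\log(wn)\right)$: it is essential both that the locality property makes each low-order bit cost only $O(wn)$ (instead of scaling with the full magnitude $2^L$ of $s$) and that the high-order bits, though individually as costly as $O(wn)$, telescope geometrically. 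Handling the sign of the weights cleanly in depth $2$ --- via the multiple-of-$2^{p+1}$ offset rather than an explicit subtraction of positive and negative parts --- is the other point requiring care.
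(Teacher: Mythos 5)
Your proof is correct and follows essentially the same route as the paper's: you truncate each $z_i$ to its low-order bits so that each of the $b$ least significant output bits is determined by a sum of magnitude $O(wn\,2^p)$ and hence costs only $O(wn)$ gates via Lemma~\ref{lem:sum-circuit}, and you extract the remaining $O(\bits(wn))$ high-order bits directly with a geometrically decaying gate count. The only difference is your explicit multiple-of-$2^{p+1}$ offset to keep the truncated sums nonnegative under signed weights, a detail the paper instead handles globally by splitting all quantities into positive and negative parts before invoking the lemma.
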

\begin{proof}
The sum $s$ requires at most $\bits(nw(2^b-1)) \leq \bits(n) + \bits(w) + b$ bits. (Recall the defintion of $\bits()$ in Section~\ref{sec:problem-statement}).

First we compute the $j$th (least significant) bit of $s$, for $1 \leq j \leq b$.  Let $s_j = \sum_i w_i \tilde{z_i}$, where $\tilde{z_i}$ is obtained from $z_i$ by ignoring all but the least significant $j$ bits.  Note that $s_j$ requires at most $\bits(n) + \bits(w) + j$ bits and that $s$ and $s_j$ have the same least significant $j$ bits.  Furthermore, since we are given the bits of each $\tilde{z_i}$, we may treat $s_j$ as an integer-weighted sum of bits, where each weight is a product of some $w_i$ with a power of 2.  Thus we may compute the $jth$ bit of $s_j$ by appealing to Lemma~\ref{lem:sum-circuit} on $s_j$ with $k = \bits(n) + \bits(w) + 1$.  To see this recall that $k$ represents the $k$th most significant bit. The $j$th least significant bit of $s_j$ is the $(\bits(s_j) - j + 1) = n + w + j - j + 1$ most significant bit.   \cindy{I was initially a little nervous that the transformation of the $\tilde{z_i}$ to bits blows up $n$, the number of terms by a factor of $\log j$, but the number of terms (numbers) doesn't matter for the number of gates in Lemma~\ref{lem:sum-circuit}. It's the number of bits in the representation of the sum, and that will not be changed by the representation. We probably don't need to add anything to the proof, but I'm documenting anything that makes me stop and think, in case it becomes relevant in the ``polishing'' phase.}  This circuit requires $2^k+1 = 2\cdot 2^{\bits(n)}\cdot 2^{\bits(w)} +1 = O(wn)$ gates, hence $O(bwn)$ gates suffice to compute the $b$ least significant bits of $s$.

Appealing to Lemma~\ref{lem:sum-circuit} to compute each of the remaining $a = \bits(n) + \bits(w)$ most significant bits of $s$ requires $O(\sum_{k=1}^a 2^k) = O(2^a) = O(wn)$ gates.  This is improved in practice by observing that the functions $y_i$ computed for $k=\bits(n) + \bits(w)$ in the proof of Lemma~\ref{lem:sum-circuit} include those required for all the most significant $\bits(n) + \bits(w)$ bits of $s$.  
\end{proof}

We need to compute products of numbers as well.  However, the products we compute are only used as inputs to other threshold gates, and we do not need an explicit base-2 representation of the bits of these products.  A more generic representation suffices: a \emph{representation} of an integer $x$ is an integer-weighted sum of binary variables, $x = \sum_{1 \leq i \leq d} w_i x_i$ with $x_i \in \{0,1\}$ and $d$ polynomial in $\bits(x)$.

\begin{lemma}\label{lem:prod-circuit}
A representation of the product of three $m$-bit nonnegative integers can be computed by a depth-1 threshold circuit with $m^3$ gates.  
\end{lemma}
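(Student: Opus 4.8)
The plan is to lean entirely on the definition of \emph{representation}: we never need the base-$2$ digits of the product, only an integer-weighted sum of gate outputs that equals it. First I would write each of the three inputs in terms of its input bits, $x = \sum_{i=0}^{m-1} x_i 2^i$, $y = \sum_{j=0}^{m-1} y_j 2^j$, and $z = \sum_{k=0}^{m-1} z_k 2^k$, with $x_i, y_j, z_k \in \{0,1\}$. Expanding the product and distributing gives
\[
xyz = \sum_{i=0}^{m-1}\sum_{j=0}^{m-1}\sum_{k=0}^{m-1} 2^{i+j+k}\, x_i y_j z_k ,
\]
a sum of $m^3$ terms, each a product of three input bits scaled by a known power of two.

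The key observation is that the product of three bits $x_i y_j z_k$ is their logical AND, and a three-way AND is itself a linear threshold function: $x_i y_j z_k = \text{bool}(x_i + y_j + z_k \ge 3)$. Hence for each triple $(i,j,k)$ I would introduce a single threshold gate $t_{ijk}$ with unit weights on the three relevant input bits and threshold $3$, each gate reading the circuit inputs directly. This uses exactly $m^3$ gates, all at depth $1$. The desired representation is then simply
\[
xyz = \sum_{i,j,k} 2^{i+j+k}\, t_{ijk},
\]
an integer-weighted sum of the binary gate outputs $t_{ijk}$ with constant weights $2^{i+j+k}$.

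Finally I would verify that this is a legitimate representation in the sense defined just before the statement: the number of binary terms is $m^3$, which is polynomial in $\bits(xyz) \le 3m$, as required. There is no genuine obstacle here; the only conceptual point to get right is to resist actually \emph{summing} the terms into base-$2$ digits, since doing so would cost extra depth. Instead the weighted sum is left symbolic and passed directly to whatever downstream circuit (via Lemma~\ref{lem:stronger-sum-circuit}) consumes the product. The sole routine check is that the three-input AND really is one threshold gate, which is immediate since all three bits must equal $1$ for the product to be $1$, i.e.\ their sum must reach $3$.
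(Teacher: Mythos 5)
Your proof is correct and is essentially identical to the paper's own argument: expand the product into $m^3$ bit-triples, realize each three-way AND as a single threshold gate $x_i+y_j+z_k\ge 3$, and leave the result as a weighted sum of gate outputs rather than collapsing it to base-2 digits. The only differences are the bit-indexing convention and your (welcome but routine) explicit check that $m^3$ is polynomial in $\bits(xyz)$.
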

\begin{proof}We compute a representation of the product of $x = \sum_{1 \leq i \leq m} 2^{i-1} x_i$, $y = \sum_{1 \leq j \leq m} 2^{j-1} y_j$, and $z = \sum_{1 \leq k \leq m} 2^{k-1} z_k$, with $x_i,y_j,z_k \in \{0,1\}$.  Thus $xyz = \sum_{1 \leq i,j,k \leq m} 2^{i+j+k-3}x_iy_jz_k$, which is a representation for $xyz$.  This representation differs from the standard binary representation in that $2^{i+j+k-3}$ can represent the same power of 2 for different values of $i, j$, and $k$.  We use $m^3$ gates in a single layer with predicates $x_i + y_j + z_k \geq 3$ to compute $x_i y_j z_k \in \{0,1\}$ for $1 \leq i,j,k \leq m$.  
\end{proof}
Our results require only the above relatively simple arithmetic circuits.  This facilitates practical implementation.

\paragraph*{Negative numbers} The above lemmas give circuits to compute products and weighted sums of nonnegative integers. However, they can be extended to handle negative integers.  We represent each integer $x$ as $x = x^+ - x^-$, where $x^+$ and $x^-$ are each nonnegative.  Other more efficient representations are possible, but we select this one as it makes for a relatively simple presentation and implementation at the cost of a constant-factor overhead in gate and wire count. 

The workhorse subroutine of our circuits, captured by Lemma~\ref{lem:stronger-sum-circuit}, is computing the bits of integer-weighted sums of nonnegative integers, $s = \sum_i w_i x_i$, where the $x_i=x_i^+-x_i^-$ depend upon the inputs to the circuit but the $w_i$ are constant with respect to the inputs.  Let $W^+$ be the set of indices with $w_i > 0$, and let $W^-$ be those indices with $w_i < 0$.  We define $s^+ = \sum_{i \in W^+} w_i x_i^+ + \sum_{i \in W^-} (-w_i)x_i^-$ to be the positive terms in sum $s$, and we define $s^- = \sum_{i \in W^+} w_i x_i^- + \sum_{i \in W^-} (-w_i)x_i^+$ to be the negation of the negative terms in sum $s$.  We have $s = s^+ - s^-$ and $s^+, s^- \geq 0$. Moreover, each of $s^+$ and $s^-$ is an integer-weighted sum of nonnegative integers, hence the bits of each of $s^+$ and $s^-$ may be computed using two separate instances of the circuit of Lemma~\ref{lem:stronger-sum-circuit}.  Each of these circuit instances only depends on the $x_i^+$ and $x_i^-$ hence we may apply them in parallel without increasing the depth of the resulting overall circuit. 



Computing products also incurs extra constant overhead.  The representation of the product of three numbers, $xyz$ described in the proof of Lemma~\ref{lem:prod-circuit} becomes $xyz = \sum_{1 \leq i,j,k \leq m} 2^{i+j+k-3}(x_i^+ - x_i^-)(y_j^+ - y_j^-)(z_k^+ - z_k^-)$.   Thus $\sum_{1 \leq i,j,k \leq m} [2^{i+j+k-3} (x_i^+y_j^+z_k^+ + x_i^+y_j^-z_k^- + x_i^-y_j^+z_k^- + x_i^-y_j^-z_k^+) + (-2^{i+j+k-3})(x_i^+y_j^+z_k^- + x_i^+y_j^-z_k^+  + x_i^-y_j^+z_k^+ + x_i^-y_j^-z_k^-)]$ is also a representation of $xyz$ that requires eight times as many gates to compute, which is still $O(m^3)$.

For ease of exposition, we proceed as if we were only computing positive quantities.  From this point on,  we assume that a number $x$ requires at most $b$ bits, by which we mean that each of $x^+$ and $x^-$ requires at most $b$ bits.

\section{Subcubic $TC^0$ circuits for trace and matrix multiplication}
\label{sec:trace-circuit}

\subsection{Overview}

Our circuits for matrix trace and matrix multiplication implement a given conventional fast matrix multiplication algorithm in both a depth-efficient and gate-efficient manner.  We assume we are given $N \times N$ integer matrices $A$ and $B$ with entries of size $O(\log N)$.  We consider two problems: (1) determining whether $\text{trace}(A^3) \geq \tau$, for an integer $\tau$, and (2) computing the bits of the matrix product $C=AB$.  We consider the first problem in this section, and the second problem in the next section.  

We define trees $\mathcal{T}_A$ and $\mathcal{T}_B$ for the input matrices $A$ and $B$, respectively, based on the recursive or divide-and-conquer structure of the fast matrix multiplication algorithm.  The nodes in $\mathcal{T}_A$ represent weighted sums of blocks of $A$ and likewise for $\mathcal{T}_B$.  The root of $\mathcal{T}_A$ represents the matrix $A$, while the leaves represent weighted sums of its entries. See Figure~\ref{fig:T_A} for a detailed explanation.

\begin{figure}
\centering
\scalebox{0.64}{
\begin{tikzpicture}[level/.style={sibling distance=60mm/#1}]
\tikzstyle{tnode} = [rectangle, draw, rounded corners]
\node [tnode] (z){$A$}
  child {node [tnode] (a) {$A_{11}$}
    child {node [tnode] (b) {$(A_{11})_{11}$}
      child {node (l1) {$\vdots$}
      } 
      child {node (l2) {$\vdots$}}
    }
    child {node [tnode] (g) {$(A_{11})_{12} - (A_{11})_{22}$}
      child {node (l3) {$\vdots$}}
      child {node (l4) {$\vdots$}}
    }
  }
  child {node [tnode] (j) {$A_{12}-A_{22}$}
    child {node [tnode] (k) {$(A_{12}-A_{22})_{11}$}
      child {node (l5) {$\vdots$}}
      child {node (l6) {$\vdots$}}
    }
  child {node [tnode] (l) {$\substack{(A_{12}-A_{22})_{12}\\-(A_{12}-A_{22})_{22}}$}
    child {node (l7) {$\vdots$}}
    child {node (c) {$\vdots$}
        child [grow=right] {node (q) {$\vdots$} edge from parent[draw=none]
              child [grow=up] {node (s) {$r^2,\, \frac{N}{T^2} \times \frac{N}{T^2}$} edge from parent[draw=none]
                child [grow=up] {node (t) {$r^1,\, \frac{N}{T} \times \frac{N}{T}$} edge from parent[draw=none]
                  child [grow=up] {node (u) {$r^0,\, N \times N$} edge from parent[draw=none]
                    child [grow=up] {node (u) {\begin{tabular}{c}(\#, size)\\of matrices\\ \hline\end{tabular}} edge from parent[draw=none]}
            }          
          }
        }
      }
    }
  }
};
\path (a) -- (j) node [midway] {$\cdots$};
\path (b) -- (g) node [midway] {$\cdots$};
\path (k) -- (l) node [midway] {$\cdots$};
\path (l1) -- (l2) node [midway] {$\cdots$};
\path (l3) -- (l4) node [midway] {$\cdots$};
\path (l5) -- (l6) node [midway] {$\cdots$};
\path (l7) -- (c) node [midway] {$\cdots$};
\path (z) -- (a) node [midway,above] {1};
\path (z) -- (j) node [midway,above] {2};
\path (a) -- (b) node [midway,left] {1};
\path (a) -- (g) node [midway,right] {2};
\path (j) -- (k) node [midway,left] {1};
\path (j) -- (l) node [midway,right] {2};
\end{tikzpicture}}
\caption{The $r$-ary tree $\mathcal{T}_A$ for Strassen's Algorithm ($r = 7$, $T=2$). For $K \times K$ matrices $U$ and $V$, the notation $U_{ij}$ or $(U)_{ij}$ refers to the $(i,j)$th $\frac{K}{T} \times \frac{K}{T}$ block of $U$. Observe that $(U+V)_{ij} = U_{ij} + V_{ij}$.  Each node has children corresponding to the $r$ multiplication expressions $M_i$ (see Figure~\ref{fig:strassen}).  An edge associated with $M_i$ is labeled with the number of terms of $A$ that appear in $M_i$.  Each node $u$ on level $h$, starting with the root as level $0$, corresponds to a matrix that is a weighted sum of $\frac{N}{T^h} \times \frac{N}{T^h}$ blocks of $A$.  The number of blocks of $A$ appearing in such a sum is the product of the edge labels on the path from $u$ to the root of the tree.  For example, $(A_{12}-A_{22})_{12} - (A_{12}-A_{22})_{22} = (A_{12})_{12} - (A_{22})_{12} - (A_{12})_{22} + (A_{22})_{22}$ is a weighted sum of 4 $\frac{N}{T^2} \times \frac{N}{T^2}$ blocks of $A$.  The $N^{\log_T r}$ leaves of $\mathcal{T}_A$ correspond to scalars that are weighted sums of entries of $A$. \cindy{See if there is any value to including the old figure, with updated notation, as part of an appendix.}}
\label{fig:T_A}
\end{figure}

In a conventional PRAM implementation of a fast matrix multiplication algorithm, all the matrices at each level of $\mathcal{T}_A$ and $\mathcal{T}_B$ are computed, and the results are reused.  Since there are $O(\log N)$ levels, we cannot hope to compute all the matrices at each level in a constant-depth circuit.
We give constant-depth threshold circuits that computes all nodes on only a constant number of levels of $\mathcal{T}_A$ and $\mathcal{T}_B$ while using a number of gates arbitrarily close to the total work performed by the fast matrix multiplication algorithm.

Our circuit computes the same $O(N^w)$ scalar products as the underlying fast matrix multiplication algorithm.  These scalars correspond to the leaves of $\mathcal{T}_A$ and those of $\mathcal{T}_B$ respectively.  Our algorithm processes these trees in a top-down fashion to compute the scalars, corresponding to the leaves.  We then appeal to Lemma~\ref{lem:prod-circuit} to compute the product of each scalar corresponding to a leaf of $\mathcal{T}_A$ with the corresponding leaf of $\mathcal{T}_B$.  For both the problems we solve, we will need to consider another tree with similar structure to $\mathcal{T}_A$ and $\mathcal{T}_B$; however, these trees will be used in different ways.

\ojas{Discuss how we may use more general fast matrix multiplication algorithms that do not take \{-1,+1\}-weighted sums.  Lemma 2 allows for more general weights.}

We assume, as in Section~\ref{sec:problem-statement}, that we have a fast matrix multiplication algorithm that multiplies two $T \times T$ matrices using $r$ multiplications.  We describe an improved $TC^0$ circuit for computing the values at the leaves of $\mathcal{T}_A$.  Our results extend naturally to computing the leaves of $\mathcal{T}_B$.  Level $h$ of $\mathcal{T}_A$ contains $r^h$ nodes, each corresponding to an $N/T^h \times N/T^h$ matrix.  Moreover, each entry of each matrix at level $h$ is the $\{-1,1\}$-weighted sum of at most $T^{2h}$ \cindy{This is correct, but a tighter expression would be $\mu^h$, where $\mu$ is the maximum number of entries from matrix $A$ over all expressions $M_i$.  Since this is at most $T^2$, this expression follows.  But it's off by a factor of $2$ for Strassen. Probably more of a comment for the future.} entries of the root matrix, $A$.  Hence if each entry of the integer matrix $A$ requires at most $b$ bits, the number of bits required for each entry of a matrix at level $h$ is at most
\begin{equation}\label{eq:bit-bound} 
\bits((2^b-1)T^{2h}) \leq b + \bits(T^{2h}) = b + O(h \log T).
\end{equation}
For our results we assume $b = O(\log N)$ bits.  Moreover, $T$ is a constant associated with the fast matrix multiplication selected, and $h \leq \log_T N$, hence the scalar values at the leaves of $\mathcal{T}_A$ require $O(\log N)$ bits.

We give a subcubic $TC^0$ circuit for computing the matrix product $C=AB$ in the next section, but first, we illustrate our main ideas by showing how to check $\text{trace}(A^3) \geq \tau$.  As mentioned in Section~\ref{sec:problem-statement}, this allows us, for example, to count triangles in a graph.  The bulk of our work lies in showing how to compute the $N^{\log_T r}$ scalar products prescribed by the fast matrix multiplication algorithm.  Each such scalar product is between a weighted sum of entries of $A$ and a weighted sum of entries of $B$.  We next show how to compute these weighted sums for $A$ with a circuit that computes a constant number of levels of $\mathcal{T}_A$.  An analogous construction works for $B$.

\ojas{Compare different approaches here at a high level and discuss our strategy.}
\subsection{Approach}
Our main goal in the following sections is to give $O(\log \log N)$-depth, $\widetilde{O}(N^{\omega})$-gate and $O(d)$-depth, $\widetilde{O}(N^{\omega + O(\gamma^d)})$-gate threshold circuits for multiplying two $N \times N$ integer matrices with $O(\log N)$-bit entries, for a positive integer $d$ and constant $\gamma < 1$.

We first motivate our approach by attempting to construct a constant-depth and $O(N^{3-\epsilon})$-gate threshold circuit for matrix multiplication using Strassen's algorithm as a guide.  As described in the previous section and Figure~\ref{fig:T_A}, our immediate goal is to compute the scalars associated with the leaves of $\mathcal{T}_A$ (and $\mathcal{T}_B$) for Strassen's algorithm by selecting a constant number of levels of $\mathcal{T}_A$ to compute.  The most natural approach is perhaps to directly compute the leaves, at level $\log_2 N$, without computing any other level.  The leaves of $\mathcal{T}_A$ correspond to $\{-1,1\}$-weighted sums of at most $N$ entries of $A$.  By recalling \eqref{eq:bit-bound} and invoking Lemma~\ref{lem:stronger-sum-circuit}, we can compute each such sum in depth 2 using $O(N \log N)$ gates.  However, we must compute $O(N^{\log_2 7})$ such sums, yielding a total of $\widetilde{O}(N^{1 + \log_2 7}) \approx \widetilde{O}(N^{3.81}) $ gates.  This can be improved to 
$\approx \widetilde{O}(N^{3.58})$ by observing that not all sums have the same number of summands; however, this still fails to achieve our goal.

We can improve the approach by employing addition circuits of depth greater than 2 due to Siu et al.~\cite{siu_depth-size_1991} (Corollary 2).  This allows computation of the desired sums in depth $O(d)$ using $O(d N^{1/d})$ gates, yielding the following result.
\begin{theorem}
\label{thm:poly-bound}
Suppose we are given $N \times N$ integer matrices $A$ and $B$ with entries of size $O(\log N)$ bits.  There is a threshold circuit of depth $O(d)$ that computes the matrix product $AB$ using $\widetilde{O}(d N^{\omega + 1/d})$ gates.
\end{theorem}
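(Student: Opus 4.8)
The plan is to realize the given fast matrix multiplication algorithm as a three-phase circuit and to swap the depth-2 summation of Lemma~\ref{lem:stronger-sum-circuit} for the depth-$O(d)$ iterated-addition circuit of Siu et al.~\cite{siu_depth-size_1991} (Corollary 2), which sums $n$ numbers of $O(\log N)$ bits in depth $O(d)$ using $\widetilde{O}(d\,n^{1/d})$ gates. The three phases are: (i) compute the scalars at the leaves of $\mathcal{T}_A$ and $\mathcal{T}_B$ directly (without materializing intermediate levels), (ii) multiply corresponding leaves pairwise, and (iii) reconstruct each entry of $C$ as a weighted sum of these scalar products via the reconstruction tree $\mathcal{T}_C$. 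Composing the three phases gives depth $O(d)+1+O(d)=O(d)$, so the whole argument reduces to bounding the gates of each phase.

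First I would handle phase (i). By \eqref{eq:bit-bound}, each leaf of $\mathcal{T}_A$ is a $\{-1,1\}$-weighted sum of at most $N$ entries of $A$, every summand and the result itself having $O(\log N)$ bits. Applying the depth-$O(d)$ addition circuit to one leaf costs $\widetilde{O}(d\,N^{1/d})$ gates; since there are $N^\omega$ leaves, phase (i) uses $\widetilde{O}(d\,N^{\omega+1/d})$ gates in depth $O(d)$, and the identical construction for $\mathcal{T}_B$ doubles this only up to a constant factor.

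For phases (ii) and (iii), I would pair each leaf of $\mathcal{T}_A$ with the corresponding leaf of $\mathcal{T}_B$ and compute a representation of their product using the two-factor specialization of Lemma~\ref{lem:prod-circuit} ($O(\log^2 N)$ gates, depth $1$ per product), for a total of $\widetilde{O}(N^\omega)$ gates over all $N^\omega$ products. Each entry of $C$ is then a $\{-1,1\}$-weighted sum of the scalar products dictated by $\mathcal{T}_C$; its representation is an integer-weighted sum of bits, so I feed it to the same depth-$O(d)$ addition circuit. The number of products feeding a single entry of $C$ is at most $(\max_i c_i)^{\log_T N}$, which for Strassen equals $2^{\log_2 N}=N$, so phase (iii) costs $\widetilde{O}\!\big(N^2\cdot d\,N^{1/d}\big)=\widetilde{O}(d\,N^{2+1/d})$ gates. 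Because $\omega\ge 2$, this term is absorbed into $\widetilde{O}(d\,N^{\omega+1/d})$, and combining all phases yields the claimed bound.

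The routine checks are that every intermediate scalar stays within $O(\log N)$ bits (so the polylog overhead of the addition and product circuits is hidden by $\widetilde{O}$), that negative weights are handled by the $x=x^+-x^-$ splitting already described at only constant-factor overhead, and that the number of summands at the leaves is bounded by $N$ rather than the crude $T^{2h}=N^2$. I expect the main point requiring care to be the reconstruction bound in phase (iii): I must argue that flattening the entire output tree $\mathcal{T}_C$ leaves each entry of $C$ dependent on only $O(N)$ products (equivalently that $\max_i c_i\le T$ for the chosen base algorithm, as holds for Strassen, with the general case bounded analogously), so that phase (iii) is dominated by the leaf computation and does not inflate the exponent beyond $\omega+1/d$.
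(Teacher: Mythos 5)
Your architecture is the one the paper intends: the paper deliberately omits a full proof of Theorem~\ref{thm:poly-bound}, but its sketch is exactly your phase (i) --- compute the leaves of $\mathcal{T}_A$ and $\mathcal{T}_B$ directly, replacing the depth-2 adder of Lemma~\ref{lem:stronger-sum-circuit} by the depth-$O(d)$, $O(dn^{1/d})$-gate adders of Siu et al.~\cite{siu_depth-size_1991} --- followed by the pairwise products and the reconstruction of $C$. Phases (i) and (ii) are fine. The flaw sits in the step you yourself flag as the delicate one, the reconstruction bound in phase (iii). You bound the number of scalar products feeding a single entry of $C$ by $(\max_i c_i)^{\log_T N}$, where $c_i$ counts the block-expressions of $C$ in which $M_i$ appears. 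That is the wrong (dual) quantity: what controls the fan-in of a fixed entry of $C$ is how many of the $M_i$ appear in the expression for the block containing that entry, at each level of the reconstruction tree --- the quantity the paper calls $c'_j$ in the proof of Lemma~\ref{lem:compute-levels}. The correct crude bound is $(\max_j c'_j)^{\log_T N}$. For Strassen, $\max_j c'_j = 4$ (e.g.\ $C_{11}=M_3+M_4-M_5+M_7$), so an entry that recursively lands in the $(1,1)$ block at every level collects $4^{\log_2 N}=N^2$ products, not $N$; for a general base algorithm $c'_j$ can be as large as $r$, giving up to $N^{\omega}$ summands per entry. With the corrected count, phase (iii) costs $\widetilde{O}(dN^{2+2/d})$ for Strassen (and up to $\widetilde{O}(dN^{2+\omega/d})$ in general), and your absorption step ``$\omega \ge 2$ so this is dominated'' no longer goes through as written: $2+2/d\le\omega+1/d$ requires $\omega\ge 2+1/d$, which fails for small $d$ or for $\omega$ close to $2$.

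The gap is repairable without changing your approach, because the theorem only promises depth $O(d)$: run every Siu et al.\ adder with depth parameter $kd$ for a suitable constant $k\ge 2$. This also patches phase (i) for general base algorithms, where a leaf of $\mathcal{T}_A$ is a sum of up to $(\max_i a_i)^{\log_T N}\le N^2$ entries rather than the $N$ you quote (the bound of $N$ is a Strassen-specific fact, $\max_i a_i = 2$, not a consequence of \eqref{eq:bit-bound}, which only controls bit-lengths). Then phase (i) costs $\widetilde{O}(dN^{\omega+2/(kd)})\subseteq\widetilde{O}(dN^{\omega+1/d})$ and phase (iii) costs $\widetilde{O}(dN^{2+\omega/(kd)})$, which is $\widetilde{O}(dN^{\omega+1/d})$ for every $d\ge 1$ once $k\ge\omega/(\omega-1)$; since the paper assumes $r>T^2$, i.e.\ $\omega>2$, taking $k=2$ suffices. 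You should state this adjustment explicitly rather than relying on the per-entry bound of $N$, since the latter is false even for Strassen.
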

We do not include a full proof of this theorem as our main results give superior results, both in terms of gate count and the simplicity of the resulting circuits.  The results of the following sections show how to more carefully select a constant number of levels of $\mathcal{T}_A$ and $\mathcal{T}_B$ in order to improve the exponent in the gate count from $\omega + 1/d$ to $\omega + O(\gamma^d)$ for a constant $\gamma < 1$.

\subsection{Matrix trace}
\label{sec:trace}

We select $t$ levels, $0 = h_0 < h_1 < h_2 < \cdots < h_t$ of the tree $\mathcal{T}_A$.  Our $TC^0$ circuit computes all of the matrices at these $t$ levels of $\mathcal{T}_A$.   Our goal is to compute the scalars corresponding to the leaves of $\mathcal{T}_A$, hence $h_t = \log_T N$.  The benefit of computing level $h_i$ is that each entry of each matrix at level $h_{i+1}$ is then a $\{-1,1\}$-weighted sum of at most $T^{2(h_{i+1}-h_i)}$ matrices at level $h_i$.

Our results rely on parameters associated with our fast matrix multiplication algorithm.  Recall $s_A$ from Definition~\ref{def:d-sparsity}.  We define $\alpha=r/s_A$ and $\beta=s_A/T^2$, and we have that $0 < \alpha \leq 1$ and $\beta \geq 1$ (for Strassen's algorithm, $\alpha = 7/12$ and $\beta = 3$).

\begin{lemma}\label{lem:level-gate-bound}
For $1 \leq i \leq t$, if the matrices at level $h_{i-1}$ of $\mathcal{T}_A$ have been computed, then the matrices at level $h_i$ can be computed in depth 2 using $O((b + h_{i-1})\alpha^{h_{i-1}} \beta^{h_i}N^2)$ gates.
\end{lemma}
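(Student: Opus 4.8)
The plan is to bound the total gate count by summing, over every entry of every matrix at level $h_i$, the cost of computing that entry as a single $\{-1,1\}$-weighted sum of the already-available entries at level $h_{i-1}$. Since each such entry is a $\{-1,1\}$-weighted sum, I would invoke Lemma~\ref{lem:stronger-sum-circuit} with weight bound $w=1$ and per-summand bit length $b' = O(b + h_{i-1})$, the latter coming directly from the level-$h_{i-1}$ specialization of \eqref{eq:bit-bound}. An entry that is a sum of $n$ terms then costs $O(b' n)$ gates in depth $2$, and because all entries at level $h_i$ are computed by independent depth-$2$ subcircuits in parallel, the overall depth is $2$. Factoring out the uniform bit bound $b'$, the total gate count is $O(b + h_{i-1})$ times the \emph{total number of summand slots} across all entries at level $h_i$, so the problem reduces to counting those slots.

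To count summand slots I would work subtree by subtree. Fix one node $M$ at level $h_{i-1}$; it is a $K \times K$ matrix with $K = N/T^{h_{i-1}}$, and the portion of $\mathcal{T}_A$ below it of depth $g := h_i - h_{i-1}$ has $r^{g}$ leaves, each a $(N/T^{h_i}) \times (N/T^{h_i})$ matrix. As explained with Figure~\ref{fig:T_A}, a leaf reached by a root-to-leaf path with edge labels $a_{j_1}, \ldots, a_{j_g}$ expresses each of its entries as a weighted sum of $\prod_{m=1}^{g} a_{j_m}$ blocks (hence entries) of $M$. Multiplying by the $(N/T^{h_i})^2$ entries per leaf and summing over leaves, the total slot count for this subtree is $(N/T^{h_i})^2 \sum_{\text{paths}} \prod_{m=1}^{g} a_{j_m}$.

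The key step is the factorization of this path sum. Because every internal node offers the same $r$ child expressions with the same edge labels $a_1,\ldots,a_r$, the sum over all $r^g$ length-$g$ paths of the product of edge labels factors by the distributive law as $\bigl(\sum_{j=1}^{r} a_j\bigr)^{g} = s_A^{g}$, using $s_A$ from Definition~\ref{def:d-sparsity}. This is the crucial improvement over the crude per-path worst case $T^{2g}$ (each $a_j \le T^2$), and it is exactly where the sparsity parameter buys us something. Summing over the $r^{h_{i-1}}$ nodes at level $h_{i-1}$ gives a total slot count of $r^{h_{i-1}} (N/T^{h_i})^2 s_A^{g} = \frac{r^{h_{i-1}} s_A^{h_i - h_{i-1}}}{T^{2h_i}} N^2$.

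Finally I would rewrite this using $\alpha = r/s_A$ and $\beta = s_A/T^2$: regrouping the exponents gives $r^{h_{i-1}} s_A^{h_i-h_{i-1}} / T^{2h_i} = (r/s_A)^{h_{i-1}} (s_A/T^2)^{h_i} = \alpha^{h_{i-1}} \beta^{h_i}$, so the total slot count is $\alpha^{h_{i-1}}\beta^{h_i} N^2$. Multiplying by the $O(b + h_{i-1})$ gate-per-slot factor yields the claimed $O\bigl((b+h_{i-1})\alpha^{h_{i-1}}\beta^{h_i} N^2\bigr)$ bound in depth $2$. I expect the only real obstacle to be stating the path-sum factorization cleanly; the rest is bookkeeping with the bit bound \eqref{eq:bit-bound} and the definitions of $\alpha$ and $\beta$.
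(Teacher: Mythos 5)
Your proposal is correct and follows essentially the same route as the paper's proof: it decomposes the work subtree-by-subtree below each level-$h_{i-1}$ node, applies Lemma~\ref{lem:stronger-sum-circuit} with the bit bound from \eqref{eq:bit-bound} to get an $O((b+h_{i-1})\cdot\text{size})$ cost per entry, and evaluates the sum of path-label products as $s_A^{h_i-h_{i-1}}$ (your distributive-law factorization is the same identity the paper states via the multinomial theorem). The final bookkeeping with $\alpha$ and $\beta$ matches the paper exactly.
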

\begin{proof}The $r^{h_i}$ nodes at level $h_i$ of $\mathcal{T}_A$ each correspond to an $N/T^{h_i} \times N/T^{h_i}$ matrix.  We set $\delta_i = h_i - h_{i-1}$ for convenience.  We can associate each node $u$ at level $h_i$ with the unique subtree rooted at level $h_{i-1}$ that contains it.  The $N/T^{h_i} \times N/T^{h_i}$ matrix corresponding to $u$ is a sum of at most $T^{2\delta_i}$ blocks of the $N/T^{h_{i-1}} \times N/T^{h_{i-1}}$ matrix associated with the root of the subtree containing $u$. 

We seek a better bound on the number of such blocks we must sum to obtain the matrix associated with $u$.  Let $\text{size}(u)$ represent this quantity, and let $\text{root}(u)$ be the node at level $h_{i-1}$ on the path from $u$ to the root of $\mathcal{T}_A$.   Recall that each edge of $\mathcal{T}_A$ corresponds to one of the fast matrix multiplication expressions $M_i$ and that $a_i$ is the number of distinct blocks of $A$ that appear in $M_i$ (defined in Section~\ref{sec:problem-statement}).  The quantity $\text{size}(u)$ is the product of the $a_i$ associated with the edges on the path from $u$ to $\text{root}(u)$ (see Figure~\ref{fig:T_A}).  Thus for each node $v$ at level $h_{i-1}$, we have:
\ojas{Give a sentence explaining this bound}
\begin{equation}\label{eq:level-bound}
\sum_{\{u \mid \text{root}(u) = v\}} \text{size}(u)
= \sum_{m_1 + \cdots + m_r = \delta_i} {\delta_i \choose m_1,\ldots,m_r} \prod_{1 \leq j \leq r} a_j^{m_j}
= s_A^{\delta_i},
\end{equation}
where the last equality follows from the multinomial theorem.
We now bound the number of gates required to compute the matrices at level $h_i$.  Since we assume the matrices at level $h_{i-1}$ have been computed, by Lemma~\ref{lem:stronger-sum-circuit}, each entry of the matrix associated with node $u$ at level $h_i$ can be computed using $O((b+h_{i-1})\text{size}(u))$ gates in depth 2.  We charge the gate count for $u$ to $\text{root}(u)$, and by \eqref{eq:level-bound} and \eqref{eq:bit-bound}, we have that the number of gates charged to each node at level $h_{i-1}$ is
\begin{equation*}
O((b + h_{i-1}) s_A^{h_{i} - h_{i-1}}N^2/T^{2h_i}), 
\end{equation*}
hence the total number of gates required for level $h_i$ is
\begin{eqnarray*}
O((b + h_{i-1}) r^{h_{i-1}} s_A^{h_{i} - h_{i-1}}N^2/T^{2h_i})
& = & \\
O((b + h_{i-1}) (r/s_A)^{h_{i-1}} (s_A/T^2)^{h_i} N^2),
\end{eqnarray*}
as desired.
\end{proof}

Next we show how to set the $h_i$ so that the number of gates required at each level is approximately balanced.  This yields a total gate count that is, at worst, within a factor of $t$ of the gate count for an optimal setting of the $h_i$.  We must assume the number of multiplications our fast $T \times T$ matrix multiplication algorithm requires is greater than $T^2$.  The results, as stated and proven below, do not hold if we have an optimal fast matrix multiplication algorithm where the number of multiplications, $r = T^2$.  We set $\gamma = \log_\beta(1/\alpha)$. Note that $0 < \gamma < 1$ since $r > T^2$ is equivalent to $\alpha \beta > 1$ (for Strassen's algorithm, $\gamma \approx 0.491$).

\begin{lemma}\label{lem:gate-bound}
Let $h_i = \lceil (1-\gamma^i)\rho \rceil$, for some $\rho > 0$.  Then all the matrices at levels $h_1,\ldots,h_t$ of $\mathcal{T}_A$ can be computed in depth $2t$ using $O(t(\alpha\beta)^\rho (b+ \log N) N^2)$ gates. 
\end{lemma}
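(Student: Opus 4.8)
The plan is to apply Lemma~\ref{lem:level-gate-bound} once for each of the $t$ levels $h_1,\ldots,h_t$ in turn and then sum the resulting per-level gate counts. The depth bound is immediate: computing level $h_i$ from level $h_{i-1}$ costs depth $2$, and since level $h_i$ must be available before level $h_{i+1}$ is computed, the $t$ depth-$2$ subcircuits stack sequentially, giving total depth $2t$.

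For the gate count, Lemma~\ref{lem:level-gate-bound} gives the cost of level $i$ as $O((b+h_{i-1})\alpha^{h_{i-1}}\beta^{h_i}N^2)$. I would show that each such term is $O((b+\log N)(\alpha\beta)^\rho N^2)$; summing over $i=1,\ldots,t$ then produces the factor of $t$ and the claimed bound. The prefactor $(b+h_{i-1})$ is easy to control: every chosen level satisfies $h_{i-1}\le h_t\le \log_T N=O(\log N)$, so $(b+h_{i-1})=O(b+\log N)$. The crux is therefore to bound $\alpha^{h_{i-1}}\beta^{h_i}$ by $O((\alpha\beta)^\rho)$.

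The key calculation I would carry out uses $h_{i-1}\ge(1-\gamma^{i-1})\rho$ together with $h_i\le(1-\gamma^i)\rho+1$, and the monotonicities $\alpha\le 1$ (so $\alpha^{(\cdot)}$ is nonincreasing) and $\beta\ge 1$ (so $\beta^{(\cdot)}$ is nondecreasing), which let the ceilings cost only a constant factor $\beta$:
\[
\alpha^{h_{i-1}}\beta^{h_i}\le \beta\cdot\alpha^{(1-\gamma^{i-1})\rho}\beta^{(1-\gamma^i)\rho}=\beta\,(\alpha\beta)^\rho\,\alpha^{-\gamma^{i-1}\rho}\beta^{-\gamma^i\rho}.
\]
It then remains to verify that the trailing factor $\alpha^{-\gamma^{i-1}\rho}\beta^{-\gamma^i\rho}$ equals $1$. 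This is exactly where the definition $\gamma=\log_\beta(1/\alpha)$ enters: it gives $\alpha=\beta^{-\gamma}$, hence $\alpha^{-\gamma^{i-1}\rho}=\beta^{\gamma^i\rho}$, which cancels $\beta^{-\gamma^i\rho}$. Thus $\alpha^{h_{i-1}}\beta^{h_i}\le\beta\,(\alpha\beta)^\rho=O((\alpha\beta)^\rho)$, since $\beta$ is a constant of the fixed fast matrix multiplication algorithm.

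The main content—rather than a genuine obstacle—is recognizing that the schedule $h_i=\lceil(1-\gamma^i)\rho\rceil$ is engineered precisely so that the shrinking factor $\alpha^{h_{i-1}}$ (reuse savings, since $\alpha<1$) exactly offsets the growing factor $\beta^{h_i}$ (blowup in the number of summands, since $\beta>1$) at every level, so that all $t$ levels cost the same up to constants. The only things to watch are the ceiling bookkeeping, handled by the monotonicity argument above at the price of a harmless constant $\beta$, and the bound $h_{i-1}=O(\log N)$; neither is a real difficulty. I would finish by summing the $t$ equal-cost levels to obtain $O(t(\alpha\beta)^\rho(b+\log N)N^2)$.
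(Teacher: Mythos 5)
Your proposal is correct and follows essentially the same route as the paper's own proof: apply Lemma~\ref{lem:level-gate-bound} level by level (stacking the depth-$2$ subcircuits to get depth $2t$), absorb the ceilings at the cost of a single factor of $\beta$ via the monotonicity of $\alpha^{(\cdot)}$ and $\beta^{(\cdot)}$, and use $\alpha=\beta^{-\gamma}$ to show each term $\alpha^{\tilde h_{i-1}}\beta^{\tilde h_i}$ equals exactly $(\alpha\beta)^\rho$. The only cosmetic difference is that the paper bounds the sum $\sum_i \alpha^{h_{i-1}}\beta^{h_i}$ all at once while you bound each term individually, which is the same argument.
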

\begin{proof}We have $h_i \leq \log_T N$ for all $0 \leq i \leq t$ since the latter is the height of $\mathcal{T}_A$.  By Lemma~\ref{lem:level-gate-bound}, level $h_i$ can be computed in depth 2 using $O((b + \log N)\alpha^{h_{i-1}}\beta^{h_i}N^2)$ gates.

Let $\tilde{h}_i = (1-\gamma^i)\rho$.  Observe that 
\begin{equation*}
\sum_{1 \leq i \leq t} \alpha^{{h}_{i-1}}\beta^{{h}_i} < \beta \sum_{1 \leq i \leq t} \alpha^{\tilde{h}_{i-1}}\beta^{\tilde{h}_i},
\end{equation*}
hence it suffices to bound $\sum_{1 \leq i \leq t} \alpha^{\tilde{h}_{i-1}}\beta^{\tilde{h}_i}$.  The terms in this sum are all equal:  
\begin{align*}
\alpha^{\tilde{h}_{i-1}}\beta^{\tilde{h}_{i}}
= \left(\frac{\alpha \beta}{\alpha^{\gamma^{i-1}}\beta^{\gamma^i}}\right)^\rho
= \left(\frac{\alpha \beta}{\alpha^{\gamma^{i-1}}(\beta^{\gamma})^{\gamma^{i-1}}}\right)^\rho
= \left(\alpha \beta\right)^\rho,
\end{align*}
so that $\sum_{1 \leq i \leq t} \alpha^{{h}_{i-1}}\beta^{{h}_i} = O(t(\alpha\beta)^{\rho})$, from which the claim follows.
\end{proof}

The above lemma establishes a tradeoff in the following sense.  The value $\rho$ impacts the total number of gates. However, we require that $h_t = \log_T N$, which imposes constraints on $t$ and, consequently, the depth of the circuit.  The larger $\rho$, the smaller $t$ needs to be in order for $h_t = \log_T N$.  

The natural strategy of taking $h_i = \lceil i \log_T N / t \rceil$ yields a weaker result, comparable to Theorem~\ref{thm:poly-bound}.
We now establish our main theorems by better quantifying the tradeoff between $\rho$ and $t$.  For these theorems we assume we are given a fast matrix multiplication algorithm and take $\omega = \log_T r$.

\begin{theorem}\label{thm:main2}
Suppose we are given an integer $\tau$ and an $N \times N$ integer matrix $A$ with entries of size $O(\log N)$ bits.  There is a threshold circuit of depth $O(\log \log N)$ that determines whether $\text{trace}(A^3) \geq \tau$ using $\widetilde{O}(N^{\omega})$ gates.
\end{theorem}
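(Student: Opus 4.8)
The plan is to reduce $\text{trace}(A^3)$ to a sum of $N^{\omega}$ triple products of scalars, each scalar being an $O(\log N)$-bit weighted sum of entries of $A$, and then to compute all of these scalars by the level-selection machinery of Lemma~\ref{lem:gate-bound}. First I would record the key algebraic identity. Writing $C = A^2$, the fast matrix multiplication expresses each entry as $C_{ij} = \sum_u \sigma_{iju}\, p_u q_u$, where $u$ ranges over the $N^{\omega}$ leaves, $p_u$ and $q_u$ are the scalars at the leaves of $\mathcal{T}_A$ and $\mathcal{T}_B$ (here $B = A$), and $\sigma_{iju} \in \{-1,0,1\}$. Since $A$ and $C$ are symmetric, $\text{trace}(A^3) = \sum_{i,j} A_{ij} C_{ij} = \sum_u p_u q_u R_u$, where $R_u := \sum_{i,j} \sigma_{iju} A_{ij}$ is again a $\{-1,0,1\}$-weighted sum of entries of $A$. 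Thus the whole quantity is a $\{-1,+1\}$-weighted sum of the triple products $p_u q_u R_u$. The $R_u$ are exactly the leaves of the ``output'' tree alluded to after Figure~\ref{fig:T_A}; it has the same $r$-ary structure as $\mathcal{T}_A$ and $\mathcal{T}_B$, but with its branching governed by the $c_i$ and hence by $s_C$ rather than $s_A$. By \eqref{eq:bit-bound}, each of $p_u, q_u, R_u$ has $O(\log N)$ bits.

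Next I would compute all three families of leaves. Define $\alpha,\beta,\gamma$ using the uniform sparsity $s = \max\{s_A,s_B,s_C\}$ from Definition~\ref{def:d-sparsity}, so that a single invocation of Lemma~\ref{lem:gate-bound} bounds each of $\mathcal{T}_A$, $\mathcal{T}_B$, and the output tree at once. Setting the level count to $t = \Theta(\log\log N)$ and choosing $\rho$ so that $h_t = \lceil (1-\gamma^t)\rho\rceil = \log_T N$ (the leaf level), Lemma~\ref{lem:gate-bound} produces all leaves in depth $2t = O(\log\log N)$ using $O\!\big(t(\alpha\beta)^\rho (b+\log N)N^2\big)$ gates. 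Since $\alpha\beta = r/T^2$ and $\omega = \log_T r$, one has $(\alpha\beta)^\rho = N^{(\omega-2)/(1-\gamma^t)} = N^{\omega-2}\cdot N^{(\omega-2)\gamma^t/(1-\gamma^t)}$. Choosing $t$ large enough that $\gamma^t \le 1/\log N$ forces the second factor to be $N^{O(1/\log N)} = O(1)$, so the gate count collapses to $O(t\,(\log N)\,N^{\omega}) = \widetilde{O}(N^{\omega})$; because $1/\gamma$ is a constant, $\gamma^t \le 1/\log N$ indeed holds for $t = \Theta(\log\log N)$.

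With the leaves in hand, I would finish with two shallow layers. Feeding the $O(\log N)$-bit values $p_u,q_u,R_u$ into Lemma~\ref{lem:prod-circuit} (which is stated precisely for products of three numbers) yields a representation of each product $p_u q_u R_u$ in depth $1$ using $O(\log^3 N)$ gates per leaf, hence $\widetilde{O}(N^{\omega})$ gates overall; the $x = x^+ - x^-$ convention handles the signs at a constant-factor cost. Concatenating these $N^{\omega}$ representations exhibits $\text{trace}(A^3) = \sum_u \pm p_u q_u R_u$ as a single integer-weighted sum of binary variables, so a single threshold gate (or one application of Lemma~\ref{lem:stronger-sum-circuit}, splitting into positive and negative parts) tests $\text{trace}(A^3)\ge \tau$. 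The total depth is $2t + 2 = O(\log\log N)$ and the total gate count is $\widetilde{O}(N^{\omega})$, as claimed.

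I expect the main obstacle to be the accounting in the second step: verifying that the balanced schedule $h_i = \lceil(1-\gamma^i)\rho\rceil$ of Lemma~\ref{lem:gate-bound} simultaneously reaches the leaf level ($h_t = \log_T N$) and keeps $(\alpha\beta)^\rho$ within an $N^{o(1)}$ factor of $N^{\omega-2}$, which is exactly where the threshold $\gamma^t \le 1/\log N$ — and therefore the $\Theta(\log\log N)$ depth — comes from. A secondary subtlety is confirming that the output tree producing the $R_u$ genuinely fits the hypotheses of Lemmas~\ref{lem:level-gate-bound} and \ref{lem:gate-bound} with $s_C$ in the role of $s_A$, so that using $s = \max\{s_A,s_B,s_C\}$ legitimately bounds all three trees at once. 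The triple-product identity itself is what keeps the depth at $2t+2$: because we contract the output side against $A$ already at the leaf level, we never propagate values back up the output tree, which is the extra work the (deeper) full matrix-product circuit of the next section must perform.
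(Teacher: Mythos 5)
Your proposal is correct and follows essentially the same route as the paper's proof: Lemma~\ref{lem:gate-bound} with $t=\Theta(\log\log N)$ levels to reach the leaves of $\mathcal{T}_A$ and $\mathcal{T}_B$, the rewriting of $\mathrm{trace}(A^3)$ as $\sum_u p_u q_u R_u$ with the $R_u$ computed in parallel by the same level-selection machinery, Lemma~\ref{lem:prod-circuit} for the triple products, and one final threshold gate. The only (harmless) cosmetic differences are that the paper fixes $\rho=\log_T N$ and lets the ceiling in $h_t=\lceil(1-\gamma^t)\rho\rceil$ absorb the deficit rather than inflating $\rho$ by $1/(1-\gamma^t)$, and it treats the three trees with their own sparsities rather than uniformly via $s=\max\{s_A,s_B,s_C\}$.
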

\begin{proof}
We appeal to Lemma~\ref{lem:gate-bound}, setting $\rho = \log_T N$.  The gate bound follows from $(\alpha \beta)^\rho = (r/T^2)^{log_T N} = N^{\omega-2}$.  To bound the depth, we set $t = \lfloor \log_{1/\gamma} \log_T N \rfloor + 1 > \log_{1/\gamma} \log_T N$.  This implies:
\begin{align*}
\log_T N - (1-\gamma^t)\log_T N
< \log_T N - (1-1/\log_T N) \log_T N
= 1.
\end{align*}
Thus $h_t = \lceil (1-\gamma^t)\log_T N \rceil = \log_T N$ as desired.

This shows that we can compute the values corresponding to the leaves of $\mathcal{T}_A$ and $\mathcal{T}_B$ in the stated gate and depth bounds.  One may see that each entry $C_{ij}$ is a weighted sum of products, $\sum_{k \in I_{ij}} w_{ijk} p_k$, where each $p_k$ corresponds to a product between a leaf of $\mathcal{T}_A$ and the corresponding leaf of $\mathcal{T}_B$ with each weight $w_{ijk} \in \{-1,1\}$.  We seek to compute 
\begin{eqnarray}
\frac{\text{trace}(A^3)}{2} & = & \sum_{i < j} A_{ij}C_{ij}
=  \sum_{i < j} A_{ij} \left(\sum_{k \in I_{ij}} w_{ijk} p_k\right)\\ \nonumber
& = & \sum_k p_k \left(\sum_{i < j: k \in I_{ij}} w_{ijk}A_{ij}\right). \label{eq:C-sum}
\end{eqnarray}
Thus for each product, $p_k$, we want to multiply it with a $\{-1,1\}$-weighted sum over entries of $A$.  We may compute these weighted sums independently and in parallel with those for $A$ and $B$ using the same techniques.  Thus we seek to compute $N^\omega$ products of 3 $O(\log N)$-bit numbers, and we appeal to Lemma~\ref{lem:prod-circuit} to accomplish this in depth 1 using a total of $\widetilde{O}(N^\omega)$ gates.  A final output gate sums the representations of the products computed by Lemma~\ref{lem:prod-circuit} and compares with the threshold $\tau$.
\end{proof}

We now prove our main theorem by exhibiting a more refined tradeoff between $\rho$ and $t$.  

\begin{theorem}\label{thm:main3}
Suppose we are given an integer $\tau$, an $N \times N$ integer matrix $A$ with entries of size $O(\log N)$ bits, and a positive integer $d$. There is a threshold circuit of depth at most $2d+5$ that determines whether $\text{trace}(A^3) \geq \tau$ using $\widetilde{O}(dN^{\omega + c\gamma^d})$ gates, where $c > 0$ and $\gamma < 1$ are constants with respect to $N$ and $d$ that depend on the parameters of the fast matrix multiplication algorithm employed. 
\end{theorem}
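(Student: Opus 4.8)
The plan is to instantiate Lemma~\ref{lem:gate-bound} with a \emph{fixed} number of levels $t = d$ and a correspondingly inflated $\rho$, and then reuse the product-and-sum finish from the proof of Theorem~\ref{thm:main2}. Since Lemma~\ref{lem:gate-bound} computes $t$ levels in depth $2t$, fixing $t = d$ immediately yields constant depth; the price is that reaching the leaves in only $d$ levels forces $\rho$ slightly above $\log_T N$. Concretely, I would set
\[
\rho = \frac{\log_T N}{1 - \gamma^d},
\]
so that $h_d = \lceil (1-\gamma^d)\rho\rceil = \lceil \log_T N\rceil = \log_T N$ (using $N = T^l$). Thus the level-$h_d$ nodes are exactly the leaves of $\mathcal{T}_A$, and analogously for $\mathcal{T}_B$ and for the third tree that carries the trace contraction $\sum_{i<j:\,k\in I_{ij}} w_{ijk}A_{ij}$, all of which are computed in parallel.

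Next I would translate the gate bound of Lemma~\ref{lem:gate-bound} into the claimed exponent. Using $(\alpha\beta)^{\log_T N} = (r/T^2)^{\log_T N} = N^{\omega-2}$, we get $(\alpha\beta)^\rho = N^{(\omega-2)/(1-\gamma^d)}$, so the lemma produces $\widetilde{O}\bigl(d\,N^{2 + (\omega-2)/(1-\gamma^d)}\bigr)$ gates. The key algebraic step is to expand the exponent as a geometric series:
\[
2 + \frac{\omega-2}{1-\gamma^d}
= \omega + (\omega-2)\frac{\gamma^d}{1-\gamma^d}
\le \omega + \frac{\omega-2}{1-\gamma}\,\gamma^d,
\]
where the inequality uses $1-\gamma^d \ge 1-\gamma$ for $d \ge 1$. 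Setting $c = (\omega-2)/(1-\gamma)$ gives a constant $c > 0$ (since $\omega > 2$ and $\gamma < 1$) that is independent of both $N$ and $d$, yielding the gate exponent $\omega + c\gamma^d$.

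To complete the argument as in Theorem~\ref{thm:main2}, I would take the computed leaf values and form each entry $C_{ij} = \sum_{k\in I_{ij}} w_{ijk}p_k$, so that $\text{trace}(A^3)/2 = \sum_k p_k\bigl(\sum_{i<j:\,k\in I_{ij}} w_{ijk}A_{ij}\bigr)$ is a sum of $N^\omega$ products of three $O(\log N)$-bit numbers. Lemma~\ref{lem:prod-circuit} computes representations of these products in depth $1$ using $\widetilde{O}(N^\omega)$ gates, which is dominated by the tree-computation count, and a single output gate sums these representations and compares with $\tau$. Accounting for the $2d$ depth of the level computations, the depth-$1$ products, the output gate, and the constant overhead for the $x = x^+ - x^-$ handling of negative quantities, the total depth is $2d + O(1) \le 2d+5$.

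I expect the main obstacle to be the tradeoff bookkeeping: one must confirm that pushing $\rho$ above $\log_T N$ to hit the leaves in only $d$ levels inflates the exponent by exactly the controllable amount $(\omega-2)\gamma^d/(1-\gamma^d)$, and that the constant $c$ emerging from the geometric-series bound is genuinely independent of $d$ rather than degrading as $d$ grows. A secondary point to verify is that the third tree for the trace contraction and the negative-number decomposition each contribute only constant additive depth and a constant-factor gate overhead, so that neither the $\widetilde{O}$ gate bound nor the $2d + O(1)$ depth is affected.
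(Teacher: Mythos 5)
Your proposal is correct and follows essentially the same route as the paper: both instantiate Lemma~\ref{lem:gate-bound} with $t \le d$ levels and a $\rho$ inflated slightly above $\log_T N$ so that $h_d$ reaches the leaves, then finish exactly as in Theorem~\ref{thm:main2}. Your parametrization $\rho = \log_T N/(1-\gamma^d)$ differs only cosmetically from the paper's $\rho = \log_T N + \varepsilon\log_{\alpha\beta}N$ and yields the identical constant $c = (\omega-2)/(1-\gamma) = \log_T(\alpha\beta)/(1-\gamma)$.
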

\begin{proof}
As for the previous theorem, we appeal to Lemma~\ref{lem:gate-bound}, this time setting $\rho = \log_T N + \varepsilon \log_{\alpha \beta} N$, for a constant $\varepsilon > 0$ whose value is given below.  We have $(\alpha \beta)^\rho = (r/T^2)^{log_T N}N^\varepsilon = N^{\omega-2 + \varepsilon}$.  

We set $\varepsilon = \gamma^d \log_T(\alpha \beta)/(1-\gamma) > \gamma^d \log_T(\alpha \beta)/(1-\gamma^d)$.  This implies:
\begin{align*}
&\log_T N - (1-\gamma^d)(\log_T N + \varepsilon \log_{\alpha \beta} N)\\
&< \log_T N - (1-\gamma^d)(\log_T N + (\gamma^d/(1-\gamma^d)) \log_T(\alpha \beta) \log_{\alpha \beta} N)\\
&= \log_T N - (1-\gamma^d)\log_T N -\gamma^d \log_T N\\
&=0,
\end{align*}
hence we may take $t < d$ in Lemma~\ref{lem:gate-bound} in order to have $h_t = \log_T N$.  The theorem follows from the argument used in the proof of Theorem~\ref{thm:main2} and taking $c = \log_T(\alpha \beta)/(1-\gamma)$ (for Strassen's algorithm, $c \approx 1.585$) .
\end{proof}

\subsection{Matrix product} 

Now we describe how to compute the entries of the matrix product $C=AB$, where we assume the entries of the $N \times N$ matrices $A$ and $B$ require $O(\log N)$ bits.  We define a tree $\mathcal{T}_{AB}$ with the same structure as $\mathcal{T}_A$ and $\mathcal{T}_B$.  Each node of $\mathcal{T}_{AB}$ represents the product of the matrices of the corresponding nodes of $\mathcal{T}_A$ and $\mathcal{T}_B$.  Hence the root of $\mathcal{T}_{AB}$ represents the matrix $C = AB$, and the leaves represent the $N^{\log_T r}$ scalar products computed by our fast matrix multiplication algorithm.  We compute the root of $\mathcal{T}_{AB}$ in a bottom-up manner assuming that we are only computing the nodes at levels $\log_T N = h_t > h_{t-1} > \ldots > h_1 > h_0 = 0$.  

We let $\alpha_C=r/s_C$ and $\beta_C=s_C/T^2$ be parameters that are a function of the fast matrix multiplication algorithm employed.  Recall that from \eqref{eq:bit-bound} we have that the scalars at the leaves of $\mathcal{T}_{A}$ and $\mathcal{T}_B$ each require $O(\log N)$ bits.  Therefore the products of these scalars represented by the leaves of $\mathcal{T}_{AB}$ also require $O(\log N)$ bits.  

We show that $\mathcal{T}_{AB}$ can be computed in a bottom-up manner with depth and gate bounds comparable to those we obtained for computing the leaves of $\mathcal{T}_A$ and $\mathcal{T}_B$ in the previous section.  We will need a lemma analogous to Lemma~\ref{lem:level-gate-bound}.

\begin{lemma} \label{lem:compute-levels} For $1 \leq i \leq t$, if the matrices at level $h_i$ of $\mathcal{T}_{AB}$ have been computed, then the matrices at level $h_{i-1}$ can be computed in depth 2 using $\widetilde{O}(\alpha_C^{h_{i-1}} \beta_C^{h_i}N^2)$ gates.
\end{lemma}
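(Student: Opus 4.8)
The plan is to mirror the proof of Lemma~\ref{lem:level-gate-bound}, but reverse the direction of traversal (bottom-up rather than top-down) and replace the role of the $A$-block counts $a_i$ by the $C$-block counts $c_i$. Fix $i$ and set $\delta_i = h_i - h_{i-1}$. First I would observe that each node $v$ at level $h_{i-1}$ corresponds to an $N/T^{h_{i-1}} \times N/T^{h_{i-1}}$ matrix that decomposes into $T^{2\delta_i}$ sub-blocks, each of the size $N/T^{h_i} \times N/T^{h_i}$ of a level-$h_i$ matrix, and that---by iterating the $\{-1,1\}$-weighted $C$-block expressions of the fast algorithm $\delta_i$ times---each such sub-block is a $\{-1,1\}$-weighted sum of the matrices at the level-$h_i$ nodes $u$ in the subtree rooted at $v$. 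For each such $u$ I would let $\text{csize}(u)$ denote the number of sub-blocks of $v$'s matrix to which $u$ contributes; unwinding one level at a time, where the $M_j$-child of a node contributes to exactly $c_j$ of its parent's $T^2$ blocks, shows that $\text{csize}(u)$ is the product of the $c_j$ over the edges on the path from $u$ up to $v = \text{root}(u)$.

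The crux is then the bottom-up analog of~\eqref{eq:level-bound}: for each node $v$ at level $h_{i-1}$, applying the multinomial theorem to $\sum_j c_j = s_C$ (Definition~\ref{def:d-sparsity}) gives
\begin{equation*}
\sum_{\{u \,\mid\, \text{root}(u) = v\}} \text{csize}(u)
= \sum_{m_1 + \cdots + m_r = \delta_i} \binom{\delta_i}{m_1,\ldots,m_r} \prod_{1 \le j \le r} c_j^{m_j}
= s_C^{\delta_i}.
\end{equation*}
This plays the same role that $s_A^{\delta_i}$ did in Lemma~\ref{lem:level-gate-bound}, but with the sparsity $s_C$ in place of $s_A$.

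To count gates I would first note that every node of $\mathcal{T}_{AB}$ represents the product $A_v B_v$ of two matrices of dimension at most $N$ whose entries are $O(\log N)$ bits by~\eqref{eq:bit-bound}, so all entries throughout $\mathcal{T}_{AB}$ fit in $O(\log N)$ bits. Assuming the level-$h_i$ entries are available as such integers, each entry of $v$'s matrix is a $\{-1,1\}$-weighted sum of level-$h_i$ entries, computable in depth $2$ by Lemma~\ref{lem:stronger-sum-circuit} with $w = 1$ and $b = O(\log N)$, i.e.\ $O(\log N)$ gates per summand. Charging each $u$ to $\text{root}(u)$ and double counting, the total number of summands over the $N^2/T^{2h_i}$ entries of each sub-block is $s_C^{\delta_i} N^2 / T^{2h_i}$ per node $v$; multiplying by the per-summand cost and by the $r^{h_{i-1}}$ nodes at level $h_{i-1}$ yields
\begin{equation*}
O\!\left(\log N \cdot r^{h_{i-1}} s_C^{\delta_i} N^2 / T^{2h_i}\right)
= O\!\left(\log N \cdot \alpha_C^{h_{i-1}} \beta_C^{h_i} N^2\right)
= \widetilde{O}\!\left(\alpha_C^{h_{i-1}} \beta_C^{h_i} N^2\right),
\end{equation*}
using $r^{h_{i-1}} s_C^{\delta_i} / T^{2h_i} = (r/s_C)^{h_{i-1}} (s_C/T^2)^{h_i} = \alpha_C^{h_{i-1}} \beta_C^{h_i}$, all at depth $2$.

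I expect the main obstacle to be getting the combinatorics of the reversed traversal exactly right. In Lemma~\ref{lem:level-gate-bound} the quantity $\text{size}(u)$ measures how many ancestor blocks \emph{feed into} $u$, and the governing sparsity is $s_A$; here the gate count is instead governed by how many ancestor sub-blocks $u$ \emph{feeds into}, which is controlled by the $c_j$ and produces $s_C$. I would take particular care to verify that the sub-block decomposition is precisely the one induced by iterating the $C$-block expressions, and that the entries never exceed $O(\log N)$ bits, so that Lemma~\ref{lem:stronger-sum-circuit} applies with $w = 1$ and its polylogarithmic overhead is safely absorbed into the $\widetilde{O}$.
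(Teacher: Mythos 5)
Your proposal is correct and follows essentially the same route as the paper's proof: a bottom-up mirror of Lemma~\ref{lem:level-gate-bound} in which the multinomial theorem turns the per-node incidence count into $s_C^{\delta_i}$, followed by Lemma~\ref{lem:stronger-sum-circuit} with $O(\log N)$-bit entries and a charge of $r^{h_{i-1}}$ nodes at level $h_{i-1}$. The only cosmetic difference is that you tally incidences grouped by descendant node (products of the $c_j$ over the $r^{\delta_i}$ paths), whereas the paper groups them by the $T^{2\delta_i}$ sub-blocks using the dual counts $c'_j$ with $\sum_j c'_j = s_C$; by double counting these are the same quantity, so both yield $s_C^{\delta_i}$ and the identical gate bound.
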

\noindent Proof: See the appendix.
\vspace*{1em} 

Using the above lemma, the proof of Lemma~\ref{lem:gate-bound} yields the following. 

\begin{lemma}
Let $h_i = \lceil (1-\gamma^i)\rho \rceil$, for some $\rho > 0$.  Then all the matrices at levels $h_1,\ldots,h_t$ of $\mathcal{T}_{AB}$ can be computed in depth $2t$ using $\widetilde{O}(t(\alpha\beta)^\rho N^2)$ gates. 
\end{lemma}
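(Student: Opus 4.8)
The plan is to mirror the proof of Lemma~\ref{lem:gate-bound} almost verbatim, substituting the bottom-up cost estimate of Lemma~\ref{lem:compute-levels} for the top-down estimate of Lemma~\ref{lem:level-gate-bound}, and replacing the pair $(\alpha,\beta)$ by $(\alpha_C,\beta_C)$. The only conceptual difference is that $\mathcal{T}_{AB}$ is processed from the leaves upward, so each application of Lemma~\ref{lem:compute-levels} turns level $h_i$ into level $h_{i-1}$; the cost bookkeeping is otherwise structurally identical. First I would record the two facts about the parameters that make the algebra go through. Since $\alpha_C\beta_C = (r/s_C)(s_C/T^2) = r/T^2 = \alpha\beta$, the exponential factor $(\alpha_C\beta_C)^\rho$ coincides with the stated $(\alpha\beta)^\rho$, so the symbol choice in the statement is unambiguous. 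Moreover $0<\alpha_C\le 1$ and $\beta_C\ge 1$, and $\gamma := \gamma_C = \log_{\beta_C}(1/\alpha_C)\in(0,1)$: these hold because each $M_i$ appears in at least one of the $T^2$ block expressions of $C$ (so $s_C\ge\max\{r,T^2\}$) and because $r>T^2$. These are exactly the inequalities that made the top-down argument work, now read off the $C$-parameters.

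Next I would sum the per-level costs. By Lemma~\ref{lem:compute-levels}, producing level $h_{i-1}$ from level $h_i$ costs $\widetilde{O}(\alpha_C^{h_{i-1}}\beta_C^{h_i}N^2)$ gates in depth $2$, so the total over $i=1,\dots,t$ is $\widetilde{O}\bigl(N^2\sum_{1\le i\le t}\alpha_C^{h_{i-1}}\beta_C^{h_i}\bigr)$, and the depth is $2t$ because the $t$ rounds are performed sequentially. To bound the sum I would pass from the integer levels $h_i=\lceil(1-\gamma^i)\rho\rceil$ to the real values $\tilde h_i=(1-\gamma^i)\rho$: since $\alpha_C\le 1$ and $\beta_C\ge 1$, the rounding-up costs at most the constant factor $\beta_C$ per term, giving $\sum_{i}\alpha_C^{h_{i-1}}\beta_C^{h_i} < \beta_C\sum_{i}\alpha_C^{\tilde h_{i-1}}\beta_C^{\tilde h_i}$, exactly as in Lemma~\ref{lem:gate-bound}.

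The crux, and the one computation worth showing, is that every term of the real-valued sum collapses to the same value. Using the defining property $\beta_C^{\gamma}=1/\alpha_C$ one gets $\alpha_C^{\gamma^{i-1}}\beta_C^{\gamma^i}=\alpha_C^{\gamma^{i-1}}(\beta_C^{\gamma})^{\gamma^{i-1}}=1$, so that $\alpha_C^{\tilde h_{i-1}}\beta_C^{\tilde h_i}=\bigl(\alpha_C\beta_C/(\alpha_C^{\gamma^{i-1}}\beta_C^{\gamma^i})\bigr)^\rho=(\alpha_C\beta_C)^\rho=(\alpha\beta)^\rho$, independent of $i$. Summing the $t$ equal terms and reabsorbing the constant $\beta_C$ gives $\sum_{i}\alpha_C^{h_{i-1}}\beta_C^{h_i}=O\bigl(t(\alpha\beta)^\rho\bigr)$, and multiplying by $N^2$ yields the claimed $\widetilde{O}\bigl(t(\alpha\beta)^\rho N^2\bigr)$ gate bound at depth $2t$.

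I do not expect a genuine obstacle here, since the argument is a transcription of the already-established Lemma~\ref{lem:gate-bound}. The only points needing care are purely bookkeeping: confirming the inequalities $\alpha_C\le 1\le\beta_C$ and $r>T^2$ so that the same balancing works with the $C$-parameters (and hence that the $\gamma$ in the statement should be read as $\gamma_C=\log_{\beta_C}(1/\alpha_C)$, which still satisfies $0<\gamma<1$), and observing $h_i\le\log_T N$ so that each $h_i$ is a genuine level of $\mathcal{T}_{AB}$. With these in hand the proof is complete.
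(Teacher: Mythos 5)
Your proof is correct and takes exactly the route the paper does: the paper gives no separate proof of this lemma, stating only that ``the proof of Lemma~\ref{lem:gate-bound} yields the following'' once the bottom-up cost estimate of Lemma~\ref{lem:compute-levels} is substituted for Lemma~\ref{lem:level-gate-bound}. Your additional bookkeeping --- that $\alpha_C\beta_C = r/T^2 = \alpha\beta$ so the stated bound is unambiguous, and that the level-balancing requires reading $\gamma$ as $\gamma_C=\log_{\beta_C}(1/\alpha_C)\in(0,1)$ --- is correct and makes explicit two details the paper leaves implicit.
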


Armed with the above lemmas, we obtain our main results in similar fashion to those for the trace of $A^3$.  The structure of our circuit is that we compute the scalars corresponding to the leaves of $\mathcal{T}_A$ and $\mathcal{T}_B$ as described in the previous section.  We then use Lemma~\ref{lem:prod-circuit} to compute the scalar products between corresponding leaves of $\mathcal{T}_A$ and $\mathcal{T}_B$ in depth 1.  We finally apply the procedure outline above to compute the root of $\mathcal{T}_{AB}$, representing the matrix product $AB$, in a bottom-up manner.  This essentially doubles the depth of the circuits we obtain compared to the corresponding circuits for the trace of $A^3$. 

\begin{theorem}\label{thm:MM-main1}
Suppose we are given $N \times N$ integer matrices $A$ and $B$ with entries of size $O(\log N)$ bits.  There is a threshold circuit of depth $O(\log \log N)$ that computes the matrix product $AB$ using $\widetilde{O}(N^{\omega})$ gates.
\end{theorem}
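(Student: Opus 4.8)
The plan is to build a single circuit out of the three trees $\mathcal{T}_A$, $\mathcal{T}_B$, and $\mathcal{T}_{AB}$ in three successive phases, reusing the parameter choices that drove Theorem~\ref{thm:main2}. First I would compute the scalars at the leaves of $\mathcal{T}_A$ and $\mathcal{T}_B$ in a top-down manner. Setting $\rho = \log_T N$ and $t = \lfloor \log_{1/\gamma} \log_T N \rfloor + 1$ exactly as in the proof of Theorem~\ref{thm:main2}, Lemma~\ref{lem:gate-bound} guarantees $h_t = \log_T N$ (so the leaves are reached) in depth $2t = O(\log\log N)$. The gate count is $O(t(\alpha\beta)^\rho (b + \log N) N^2)$, and since $(\alpha\beta)^{\log_T N} = (r/T^2)^{\log_T N} = N^{\log_T r - 2} = N^{\omega - 2}$, multiplying by $N^2$ and absorbing the $t = O(\log\log N)$ and $(b+\log N) = O(\log N)$ factors into the $\widetilde{O}$ gives $\widetilde{O}(N^\omega)$ gates for this phase.

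Second, having the $N^\omega$ leaves of $\mathcal{T}_A$ and of $\mathcal{T}_B$ — each an $O(\log N)$-bit scalar by \eqref{eq:bit-bound} — I would pair corresponding leaves and invoke Lemma~\ref{lem:prod-circuit} to produce a representation of each of the $N^\omega$ scalar products in depth $1$, each using $O((\log N)^3)$ gates, so that this layer contributes only $\widetilde{O}(N^\omega)$ gates. These products are precisely the leaves of $\mathcal{T}_{AB}$. Third, I would compute the root of $\mathcal{T}_{AB}$, which represents $C = AB$, in a bottom-up manner using the balanced-level lemma for $\mathcal{T}_{AB}$ stated just above (the analogue of Lemma~\ref{lem:gate-bound}, built on Lemma~\ref{lem:compute-levels}) with the same level schedule $h_i = \lceil (1-\gamma_C^i)\rho \rceil$ and $\rho = \log_T N$. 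Because $\alpha_C \beta_C = (r/s_C)(s_C/T^2) = r/T^2 = \alpha\beta$, the identity $(\alpha_C\beta_C)^{\log_T N} = N^{\omega - 2}$ holds verbatim, so this phase again costs depth $O(\log\log N)$ and $\widetilde{O}(N^\omega)$ gates; the constant $\gamma_C = \log_{\beta_C}(1/\alpha_C) \in (0,1)$ plays the role of $\gamma$ and only affects the implicit constant in the depth. Summing the three phases, the total depth is $O(\log\log N) + 1 + O(\log\log N) = O(\log\log N)$ and the total gate count is $\widetilde{O}(N^\omega)$, as claimed.

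The main obstacle I anticipate is the third phase: unlike the first two it runs against the grain of the recursion, and its correctness and cost rest entirely on Lemma~\ref{lem:compute-levels} and its balanced-level corollary, whose bookkeeping involves $s_C$ and $\alpha_C,\beta_C$ rather than $s_A$. The two delicate points are (i) confirming that the leaf products genuinely require only $O(\log N)$ bits, so that the addition circuits of Lemma~\ref{lem:stronger-sum-circuit} stay cheap all the way up the tree, and (ii) verifying that reusing the same $\rho$ and level schedule keeps every level's gate count balanced at $\widetilde{O}(N^\omega)$ despite the reversed direction. Once these are secured, the depth arithmetic and the final assembly of the output entries of $C$ are routine.
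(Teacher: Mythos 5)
Your proposal matches the paper's own argument for Theorem~\ref{thm:MM-main1}: compute the leaves of $\mathcal{T}_A$ and $\mathcal{T}_B$ top-down with the parameter choices from Theorem~\ref{thm:main2}, multiply corresponding leaves via Lemma~\ref{lem:prod-circuit} in depth $1$, and assemble the root of $\mathcal{T}_{AB}$ bottom-up via Lemma~\ref{lem:compute-levels} and its balanced-level corollary, with the observation $\alpha_C\beta_C = r/T^2 = \alpha\beta$ giving the same $N^{\omega-2}$ factor. The two ``delicate points'' you flag are exactly what the paper's Lemma~\ref{lem:compute-levels} and the preceding bit-size discussion resolve, so the proof is correct and essentially identical to the paper's.
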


\begin{theorem}\label{thm:MM-main2}
Suppose we are given $N \times N$ integer matrices $A$ and $B$ with entries of size $O(\log N)$ bits, and a positive integer $d$. There is a threshold circuit of depth at most $4d+1$ that computes the matrix product $AB$ using $\widetilde{O}(dN^{\omega + c\gamma^d})$ gates, where $c > 0$ and $\gamma < 1$ are constants with respect to $N$ and $d$ that depend on the parameters of the fast matrix multiplication algorithm employed. 
\end{theorem}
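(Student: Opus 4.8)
The plan is to build the circuit in three sequential phases that mirror the proof of Theorem~\ref{thm:main3}, inserting a new bottom-up phase for the product tree $\mathcal{T}_{AB}$. The circuit first computes the scalar leaves of $\mathcal{T}_A$ and $\mathcal{T}_B$ top-down, then multiplies corresponding leaves to obtain the leaves of $\mathcal{T}_{AB}$, and finally folds these products up $\mathcal{T}_{AB}$ to recover its root $C = AB$. I would reuse verbatim the parameter choice from Theorem~\ref{thm:main3}: set $\rho = \log_T N + \varepsilon \log_{\alpha\beta} N$ with $\varepsilon = \gamma^d \log_T(\alpha\beta)/(1-\gamma)$ and $h_i = \lceil (1-\gamma^i)\rho \rceil$, which simultaneously guarantees $h_t = \log_T N$ for some $t < d$ and $(\alpha\beta)^\rho = N^{\omega - 2 + \varepsilon}$.

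For the first phase, I would invoke Lemma~\ref{lem:gate-bound} on $\mathcal{T}_A$ and, in parallel, on $\mathcal{T}_B$, computing all scalar leaves in depth $2t$ using $\widetilde{O}(t (\alpha\beta)^\rho N^2) = \widetilde{O}(t N^{\omega + \varepsilon})$ gates; since $t < d$ and $\varepsilon = c\gamma^d$ with $c = \log_T(\alpha\beta)/(1-\gamma)$, this is $\widetilde{O}(d N^{\omega + c\gamma^d})$. By \eqref{eq:bit-bound} every leaf scalar has $O(\log N)$ bits. The second phase applies the two-factor specialization of Lemma~\ref{lem:prod-circuit} to each of the $N^{\log_T r} = N^\omega$ pairs of corresponding leaves, producing a representation of each scalar product in depth $1$ using $\widetilde{O}(N^\omega)$ gates in total; these products are exactly the leaves of $\mathcal{T}_{AB}$ and also require only $O(\log N)$ bits.

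The third phase runs the bottom-up analog. Here I would appeal to Lemma~\ref{lem:compute-levels} and its corollary to fold the leaves at level $h_t$ upward through levels $h_{t-1}, \ldots, h_0 = 0$ of $\mathcal{T}_{AB}$ in depth $2t$, combining each level into the next coarser one via $\{-1,1\}$-weighted sums. The arithmetic identity that makes this gate bound match the top-down phase is $\alpha_C \beta_C = (r/s_C)(s_C/T^2) = r/T^2 = \alpha\beta$, so the corollary again yields $\widetilde{O}(t (\alpha\beta)^\rho N^2) = \widetilde{O}(d N^{\omega + c\gamma^d})$ gates. Summing the three phases, the dominant term is $\widetilde{O}(d N^{\omega + c\gamma^d})$, and the depth is $2t + 1 + 2t = 4t + 1 \le 4d + 1$, as claimed.

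The real content, and hence the main obstacle, is confined to the bottom-up phase and is essentially already discharged by Lemma~\ref{lem:compute-levels}: one must verify that the \emph{dual} multinomial count — summing products of the $c_i$ labels along root-to-node paths in place of the $a_i$ labels used in Lemma~\ref{lem:level-gate-bound} — collapses to $s_C^{\delta_i}$, which is exactly why $\alpha_C,\beta_C$ are the governing parameters and why the top-down and bottom-up gate bounds coincide. A secondary point I would make explicit is representation compatibility: the outputs of Lemma~\ref{lem:prod-circuit} are non-standard representations (integer-weighted sums of bits with possibly repeated powers of two), but since Lemma~\ref{lem:sum-circuit}, and therefore Lemma~\ref{lem:stronger-sum-circuit}, accepts any integer-weighted sum of bits as input, these representations feed directly into the bottom-up addition layers with no extra depth. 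Finally I would note that because each entry of $C$ is a weighted sum of at most $N$ products of $O(\log N)$-bit scalars, all intermediate magnitudes remain $N^{O(1)}$, so every addition circuit incurs only polylogarithmic overhead absorbed by $\widetilde{O}$, and negative values are handled by the $x = x^+ - x^-$ splitting at constant-factor cost.
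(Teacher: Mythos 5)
Your proposal is correct and follows essentially the same route as the paper: compute the leaves of $\mathcal{T}_A$ and $\mathcal{T}_B$ top-down via Lemma~\ref{lem:gate-bound} with the parameter choice of Theorem~\ref{thm:main3}, multiply corresponding leaves via Lemma~\ref{lem:prod-circuit}, and fold the results up $\mathcal{T}_{AB}$ via Lemma~\ref{lem:compute-levels}, giving depth $4t+1$ and the stated gate bound. Your explicit observations that $\alpha_C\beta_C = \alpha\beta = r/T^2$ and that the nonstandard representations output by Lemma~\ref{lem:prod-circuit} feed directly into the addition circuits are useful elaborations of details the paper leaves implicit, but they do not constitute a different approach.
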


\section{Matrix-Multiplication Application Background}
\label{sec:MM-apps}
In this section, we provide more background on the relevance of dense matrix multiplication to deep learning.  We also discuss the relevance of matrix multiplication to triangle counting in graphs and triangle counting's relevance to social network analysis.

\paragraph{Deep Learning} As mentioned in Section~\ref{sec:intro}, our primary motivation for neural-circuit-based matrix multiplication is convolutional neural networks for deep learning. 
See Warden's clear explanation of the role of matrix multiplication in convolution steps for neural networks~\cite{Warden2015}, which we summarize here. For more details see the Stanford course notes at \url{http://cs231n.github.io}.  These networks assume the input is a two-dimensional image, with an $n \times n$ grid of pixels, each with $\ell$ channels.  The neural networks usually refer to the number of channels as depth, but in this paper, ``depth'' refers to the number of layers in our circuit.  Typically the number of channels $\ell$ is a constant, but not necessarily just the three classic color channels (red, green, blue).  In a convolutional step, we apply a set of $K$ {\em kernels} to the image. Each kernel looks for a particular subpattern such as a horizontal edge or a splash of red. The kernel considers a small constant $q \times q$ submatrix of pixels (with $\ell$ channels) at a time and is applied across the whole image based on a stride.  This recognizes the pattern no matter where it is in the image. For example, if the stride is four, then the kernel is applied to every fourth column and every fourth row.  A place where the kernel is applied is called a {\em patch}.  For each patch, for each kernel, a dot product scores the extent to which the patch matches the kernel. Computing all the kernels simultaneously is a matrix multiplication.  The first matrix is $P \times Q$, where $P = O(n^2)$ is the number of patches and $Q = q \times q \times \ell$ is the number of elements in a kernel. The second matrix is $Q \times K$.  This gives a $P \times K$ output matrix, giving the score for each patch for each kernel.

Let $N$ be the largest matrix dimension and suppose we use a fast matrix multiplication algorithm that can multiply two $N \times N$ matrices in time $O(N^{\omega})$. Our circuit requires fan-in as large as $O(N^\omega)$.  These are gates at the end that compute the final output matrix entries. Two of the relevant matrix dimensions for convolutional neural networks, $K$ and $Q$, are generally constants.  The third dimension $P$ is not.  However, if the particular architecture can only support fan in $x$, we can break the matrix multiplication into independent pieces, each with at most $\sqrt[\omega]{x}$ rows in the first matrix. These can run in parallel, so they have the same depth, given a large enough architecture.  Thus the unbounded fan-in in our algorithm is not neccesarily a practical limitation for our motivating application.

\paragraph{Social Network Analysis} Social networks of current interest are too large for our circuit methods to be practical for neuromorphic architectures in the near future.  Also social network adjacency matrices are sparse, unlike the dense small matrices for convolutional neural networks we described above.  Nevertheless, we briefly review the motivation for matrix multiplication in this setting. One application is computing the clustering coefficient of an $N$-node graph (or subgraph). The global clustering coefficient is the ratio of the number of triangles in the graph to the number of \emph{wedges} (length-$2$ paths) in the graph.  A degree-$\delta$ node is at the center of  ${\delta \choose 2}$ wedges.  The global clustering coefficient is the fraction of total wedges in the graph that close into triangles. These triangles are common in social networks, where the central node of a wedge may introduce two neighbors.  
Social-network-analysis researchers believe a high global clustering coefficient (also called transitivity) means the graph has community structure. For example, Seshadri, Kolda and Pinar~\cite{BTER-theory-2012} assumed constant global clustering coefficients when proving a structural property of social networks they used for their BTER (Block Two-Level Erd\"{o}s-Renyi) generative model.  Orman, Labatut and Cherifi~\cite{OrmanLC2013} empirically studied the relationship between community structure and clustering coefficient.  They found that high clustering coefficients did imply community structure, although low clustering coefficients did not preclude it.

This paper considered the question: ``Does a graph $G$ have at least $\tau$ triangles?'' The user can pick a value of $\tau$ that represents reasonable community structure for their particular kind of graph. Usually they compute the total number of wedges $D$ in $O(N)$ time and set $\tau$ to some function of $D$ (perhaps just scaling by a constant).

\section{Open problems}
The main open problem is whether we can do matrix multiplication with $\widetilde{O}(N^{\omega})$ gates in constant depth.  Theorem~\ref{thm:main2} shows this can be done in $O(\log \log N)$ depth.  Another open question is lower bounds: What is the minimum depth of a threshold circuit for computing matrix products using $O(N^{3-\varepsilon})$ gates?  Can one show that a constant-depth threshold circuit using $\widetilde{O}(N^\omega)$ gates yields an $O(\log N)$ PRAM algorithm with $O(N^\omega)$ work?  

\ojas{Our technique should extend to other algebraic divide-and-conquer algorithms.  Using the following lemma, one can perform matrix-vector multiplications in $TC^0$}

\ojas{If we do not count inputs as gates, is it possible to beat $O(N^2)$ gates for counting triangles or computing trace?\\}
\ojas{Should extend to other $O(N^\omega)$ divide-and-conquer algebraic algorithms (e.g. inversion).  What else?\\}
\ojas{Better FFT than matrix-vector multiply using Siu et al.  Using layers to exploit structure rather than for gate-efficient addition\\}

One may show that our circuits are $L$-uniform.  Can a stronger uniformity condition be imposed? 

One advantage of neural networks is their low energy relative to CMOS-based electronics. One possible energy model for threshold gates is to charge a gate only if it fires~\cite{uchizawa_energy_2006}.  That is, charge a gate one unit of energy for sending a signal if and only if the weighted sum of the inputs exceeds the threshold.  What is the energy complexity of the kinds of matrix-multiplication circuits we consider?  


\section*{Acknowledgements}
This research was supported by the Laboratory Directed Research and Development program at Sandia National Laboratories, a multi-mission laboratory managed and operated by National Technology and Engineering Solutions of Sandia, LLC., a wholly owned subsidiary of Honeywell International, Inc., for the U.S. Department of Energy's National Nuclear Security Administration under contract DE-NA0003525.

\bibliographystyle{ACM-Reference-Format}
\bibliography{main}

\section*{Appendix}

{\sc \bf Proof of Lemma~\ref{lem:compute-levels}:}

Our proof uses a similar analysis to that of Lemma~\ref{lem:level-gate-bound}.  We need new parameters derived from our fast matrix multiplication algorithm.  For $1 \leq j \leq T^2$, we use $j$ to index the $T^2$ expressions for entries of $C$. We define $c'_j$ as the number of $M_i$ terms that appear in the $j$th expression for an entry of $C$.  For Strassen's algorithm (Figure~\ref{fig:strassen}), we have $c'_1 = 4$, $c'_2 = 2$, $c'_3 = 2$, and $c'_4 = 4$.  Recall the sparsity parameter $s_C$ from Definition~\ref{def:d-sparsity}, and observe that $s_C = \sum_{1 \leq j \leq T^2} c'_j$.

We assume the matrix products at level $h_i$ of $\mathcal{T}_{AB}$ have been computed and compute a node $u$ at level $h_{i-1}$.  We again define $\delta_i = h_i - h_{i-1}$ for convenience.  As an example, suppose we were using the scalars corresponding to the leaves of $\mathcal{T}_{AB}$, at level $h_t$, to construct a $T \times T$ matrix at level $h_t-1$.  Each of the $T^2$ entries of this matrix consists of a $\{-1,1\}$-weighted sum of the scalars at level $h_t$.  

More generally, the matrix corresponding to node $u$ at level $h_{i-1}$ is composed of $T^{2\delta_i}$ blocks that are each $\{-1,1\}$-weighted sums of matrices of size $N/T^{h_i} \times N/T^{h_i}$ from level $h_i$.  We seek to bound the number of terms in each such sum.  Let $u_l$ for $1 \leq l \leq T^{2\delta_i}$ correspond to the blocks of the matrix at node $u$, and let $\text{size}(u_l)$ be the number of terms in the weighted sum of matrices from level $h_i$ that is equal to the block $u_l$.  Using an approach similar to that from the proof of Lemma~\ref{lem:level-gate-bound}, we have:
\begin{eqnarray}
\sum_{1 \leq l \leq T^{2\delta_i}} \text{size}(u_l)  & = &\sum_{m_1 + \cdots + m_{T^2}  = \delta_i} {\delta_i \choose m_1,\ldots,m_{T^2}} \prod_{1 \leq j \leq T^2} (c_j')^{m_j} \nonumber\\ 
& = & \left(\sum_{1 \leq j \leq T^2} c'_j\right)^{\delta_i} = s_C^{\delta_i},
\end{eqnarray}
where, as in the proof of Lemma~\ref{lem:level-gate-bound}, the penultimate equality follows from the multinomial theorem. 
 
Each block $u_l$ is of size $N/T^{h_i} \times N/T^{h_i}$, and by the discussion preceding this lemma and \eqref{eq:bit-bound}, we have that each entry of a block $u_l$ requires $O(\log N)$ bits.  Thus, by Lemma~\ref{lem:stronger-sum-circuit}, we may compute all of the blocks $u_l$ of $u$ in depth 2 with a gate count of:
\begin{align}
& \sum_{1 \leq l \leq T^{2\delta_i}} O(N^2/T^{2h_i} \log N \text{size}(u_l)) \nonumber \\
& = O(N^2/T^{2h_i} \log N \sum_{1 \leq l \leq T^{2\delta_i}} \text{size}(u_l)) \nonumber \\
& = O(N^2/T^{2h_i} \log N s_C^{\delta_i}),
\end{align}
where the last equality follows from the above equation.
Since there are $r^{h_{i-1}}$ nodes in total on level $h_{i-1}$, we may compute all the matrices on level $h_{i-1}$ with a total gate count of,
\begin{eqnarray*}
O(r^{h_{i-1}} s_C^{h_{i} - h_{i-1}}N^2/T^{2h_i} \log N)
&=& O((r/s_C)^{h_{i-1}} (s_C/T^2)^{h_i} N^2 \log N)\\
&=& \widetilde{O}(\alpha_C^{h_{i-1}} \beta_C^{h_i}N^2).
\end{eqnarray*}
\qed

\end{document}